
\documentclass[letterpaper, 10 pt, conference]{ieeeconf}  

\IEEEoverridecommandlockouts                              
\overrideIEEEmargins

\usepackage{amsmath}
\usepackage{amssymb}
\usepackage{amsfonts}
\usepackage{float}
\usepackage{verbatim}
\usepackage{multirow}
\usepackage{graphicx}
\usepackage{color}
\usepackage{url}
\usepackage{xspace} 
\usepackage{caption}
\usepackage{hyperref}
\usepackage[caption=false,font=footnotesize]{subfig}

\usepackage[vlined,ruled]{algorithm2e}
\usepackage{multirow} 
\usepackage{tikz,pgf}
\usetikzlibrary{arrows,automata,shapes,calc,backgrounds,spy}
\usetikzlibrary{external}
\usepackage{diagbox}


\usepackage[colorinlistoftodos,prependcaption,textsize=tiny]{todonotes}
\usepackage{marginnote}
\makeatletter
\renewcommand{\@todonotes@drawMarginNoteWithLine}{%
\begin{tikzpicture}[remember picture, overlay, baseline=-0.75ex]%
    \node [coordinate] (inText) {};%
\end{tikzpicture}%
\marginnote[{
    \@todonotes@drawMarginNote%
    \@todonotes@drawLineToLeftMargin%
}]{
    \@todonotes@drawMarginNote%
    \@todonotes@drawLineToRightMargin%
}%
}
\makeatother




\newtheorem{assumption}{Assumption}







\newcommand{\CA}[1]{\mathcal{#1}}

\newcommand{\BB}[1]{\mathbb{#1}}

\newcommand{\set}{\mathcal{S}}

\newcommand{\notltl}{\neg}
\newcommand{\andltl}{\land}
\newcommand{\orltl}{\lor}

\newcommand{\Always}{\Box}
\newcommand{\Event}{\diamondsuit}
\newcommand{\Until}{\xspace\CA{U}\xspace}

\newcommand{\True}{\top}
\newcommand{\False}{\perp}

\newcommand{\Proj}{\CA{P}}

\newcommand{\mudist}{distance\xspace}

\newcommand{\hausdorff}{PH\xspace}

\newcommand{\symdiff}{SD\xspace}

\newcommand{\maxhorz}{T}

\newcommand{\stldist}{d_{STL}}


\newcommand{\buchi}{B\"uchi\ }

\newcommand{\stl}{STL\xspace}




\newcommand{\eqclass}[1]{\left\langle {#1} \right\rangle}
\newcommand{\norm}[1]{\left\| {#1} \right\|}

\newcommand{\spow}[1]{2^{#1}}

\DeclareMathOperator*{\argmin}{arg\!min}

%
\usepackage{scalefnt}

\usepackage[ruled]{algorithm2e} 

\SetAlFnt{\small}
\SetAlCapFnt{\small}
\SetAlCapNameFnt{\small}
\SetAlCapHSkip{0pt}
\IncMargin{-\parindent}

\newcommand{\journal}[1]{}

\usepackage{siunitx} 
\DeclareSIUnit\Molar{\textsc{m}}

\usepackage{verbatim}
\usepackage{soul}
\usepackage{cite}
\usepackage{balance}

\usepackage{color, colortbl}

\newtheorem{define}{Definition}
\newtheorem{lemma}{Lemma}
\newtheorem{example}{Example}

\newtheorem{theorem}{Theorem}
\newtheorem{corollary}{Corollary}
\newtheorem{proposition}{Proposition}


\title{\LARGE \bf
Metrics for Signal Temporal Logic Formulae
}

\author{Curtis Madsen$^*$, Prashant Vaidyanathan$^*$, Sadra Sadraddini$^*$, Cristian-Ioan Vasile$^*$, Nicholas A. DeLateur,\\Ron Weiss, Douglas Densmore, and Calin Belta
\thanks{$^*$These authors contributed equally. This work was partially supported by the National Science Foundation under grant CPS Frontier 1446607 and NSF IIS-1723995. Curtis Madsen (ckmadsen@bu.edu), Prashant Vaidyanathan (prash@bu.edu), Douglas Densmore (dougd@bu.edu), and Calin Belta (cbelta@bu.edu) are at Boston University, Boston, MA, USA. Sadra Sadraddini (sadra@mit.edu), Cristian-Ioan Vasile (cvasile@mit.edu), Nicholas A. DeLateur (delateur@mit.edu), and Ron Weiss (rweiss@mit.edu) are at Massachusetts Institute of Technology, Cambridge, MA, USA.  
}
}

\begin{document}

\maketitle
\thispagestyle{empty}
\pagestyle{empty}


\begin{abstract}
Signal Temporal Logic (STL) is a formal language for describing a broad range of real-valued, temporal properties in cyber-physical 
systems.
While there has been extensive research on verification and control synthesis from STL requirements, there is no formal framework for comparing two STL formulae.
In this paper, we show that under mild assumptions, STL formulae admit a metric space.
We propose two metrics over this space based on i) the Pompeiu-Hausdorff distance and ii) the symmetric difference measure, and present algorithms to compute them.
Alongside illustrative examples, we present 
applications of these metrics 
for two fundamental problems: a) {\em design quality measures}: to compare all the temporal behaviors of a designed system, such as a synthetic genetic circuit, with the ``desired'' specification, and b) {\em loss functions}: to quantify errors in Temporal Logic Inference (TLI) as a first step to establish formal performance guarantees of TLI algorithms.

\end{abstract}






\section{Introduction}

Temporal logics~\cite{baier2008modelchecking} are increasingly used for describing specifications in cyber-physical systems such as robotics~\cite{kress2009temporal}, synthetic biology~\cite{batt2007robustness}, and transportation ~\cite{coogan2017formal}.
Variants of temporal logics, such as {\em Computation Tree Logic} (CTL)~\cite{Emerson1982}, {\em Linear Temporal Logic} (LTL)~\cite{pnueli1977temporal}, or  {\em Signal Temporal Logic}~(STL)~\cite{maler2004monitoring}, can naturally describe a wide range of temporal system properties such as \emph{safety} (never visit a ``bad" state), \emph{liveness} (eventually visit a ``good" state), \emph{sequentiality}, and their arbitrarily elaborate combinations.

Using {\em model checking}~\cite{baier2008modelchecking} techniques, signals or traces can be checked to determine whether or not they satisfy a specification.
For STL in particular, the {\em degree of satisfaction} or \emph{robustness} is a quantitative measure to characterize how far a signal is from satisfaction~\cite{fainekos2009robustness,donze2010robust,donze2013efficient} of an STL formula.
There is currently, however, no formal way to directly compare specifications against each other.
Previous related approaches in planning and control have looked into specification
relaxation, where the goal is to minimally enlarge the specification language to
include a satisfying control policy for the system model.
Various specification relaxations have been defined including minimum
violation~\cite{Tumova-ACC13,Tumova-HSCC13,VaTuKaBeRu-ICRA-2017} for self-driving
cars, temporal relaxation of deadlines~\cite{VaAkBe-TCS-2016},
minimum revision of \buchi automata~\cite{kim2015minimalrevision}, and 
diagnosis and repair in reactive synthesis~\cite{ghosh2016diagnosis}.
While language inclusion and equivalence problems are of paramount importance in computer science and control theory, they are only qualitative measures while we are interested in quantitative metrics. 

This paper presents two metrics that can be used to compute the \mudist between two STL specifications. Under mild assumptions, we propose the metrics based on the languages of STL formulae. We propose two distance functions. The first is based on the \emph{Pompeiu-Hausdorff} (\hausdorff) distance \cite{munkres2000topology}, which captures how much the language of one formula must be enlarged to include
the other, and the second is based on the \emph{symmetric difference} (\symdiff) \cite{conway2012course}, which characterizes how much overlap there is between the two formulae.
The theoretical contributions of this paper are: 
\begin{enumerate}
\item[1.] formalization of STL formulae metrics based on the \hausdorff and the \symdiff distances, and
\item[2.] methods for computing the \hausdorff using \emph{mixed-integer linear programming} (MILP), and the \symdiff using a recursive algorithm based on the \emph{area of satisfaction}.
\end{enumerate}
We discuss the comparison of the two metrics in detail and provide examples that highlight their differences.

This paper additionally presents applications of the proposed metrics to a behavioral synthesis problem and to the evaluation of {\em temporal logic inference} (TLI)~\cite{Bartocci2014,Jin2015,bombara_decision_2016,vaidyanathan2017grid,Hoxha2018} methods.
In the first case, we are interested in generating designs that exhibit desired behaviors specified in STL.
For example, we study synthetic genetic circuits.
Possible circuit designs are constructed and measured in laboratory experiments, and the resulting traces are abstracted into STL specifications using TLI.
These formulae are compared quantitatively against the desired design specification using the proposed metrics.
The second setup considers the fundamental problem of evaluating TLI methods themselves.
Under the assumption that data used for inference can be characterized by ground truth STL formulae, we ask the question of how well the TLI algorithms perform.
As opposed to empirical evaluation used in previous work, we propose to use our metrics as loss functions as the first step in establishing theoretical foundations for TLI.
The related contributions are:
\begin{enumerate}
\item[3.] a design quality measure for evaluating proposed implementations against an STL specification, and 
\item[4.] a loss function to quantify errors in TLI as a first step in establishing formal performance guarantees of TLI algorithms.
\end{enumerate}

\section{Preliminaries}
\label{sec:preliminaries}

\subsection*{Notation}
Let $\BB{R}, \BB{R}_{\ge 0}$, $\BB{N}$ denote the set of real, non-negative real, and natural numbers, respectively. We use $|r|$ to denote the absolute value of $r\in \BB{R}$. Given $x \in \BB{R}^n$, $\left\|x\right\|_\infty:=\max_{i \in \{1,\cdots,n\}} |x^i|$ - where $x^i$ is the $i$'th component of $x$ - is its infinity-norm. A scalar-valued function $f: \BB{R}^n \rightarrow \BB{R}$ is \emph{rectangular} if for some $i \in \{1,\cdots,n\}$, $f(x)=x^i$. 
A metric space is an ordered pair $(\CA{M}, d)$, where $\CA{M}$ is a set and $d: \CA{M} \times \CA{M} \rightarrow \BB{R}_{\ge 0}$ is a distance function such that i) $d(x,y)=0 \Leftrightarrow x=y$; ii) $d(x,y)=d(y,x), \forall x,y \in \CA{M};$ iii) $d(x,z) \le d(x,y)+d(y,z), \forall x,y,z \in \CA{M}$. If $(\CA{M},d_1)$ and $(\CA{M},d_2)$ are metric spaces, then $(\CA{M},\lambda d_1 + (1-\lambda) d_2)$, is also a metric space for any $0 \le \lambda \le 1$.
 

We use discrete notion of time throughout this paper. Time intervals in the form $I=[t_1,t_2] \subset \BB{N}$, $t_1,t_2 \in \BB{N}, t_1\le t_2$, are interpreted as $\{t_1,t_1+1,\cdots,t_2\}$. $[\tau+t_1, \tau+t_2]$ is denoted by $\tau+I$, $\tau \in \mathbb{N}$. The continuous interval $\{r|0 \le r \le 1\}$ is denoted by $\mathbb{U}$. 
An $n$-dimensional, real, infinite-time, discrete-time signal $s$ is defined as a string of real values $s:s_0s_1 s_2\cdots $, where $s_t \in \BB{S}, \BB{S} \subset \BB{R}^n$, $t \in \BB{N}$. The \emph{suffix} of $s$ at $t$, denoted by $s[t]$, is a signal such that $s[t]_\tau = s_{t+\tau}$ for all $t,\tau \in \BB{N}$. We use $s[t_1,t_2]:=s_{t_1}s_{t_1+1}\cdots s_{t_2}$ to refer to a particular portion of a signal. The set of all signals with values taken in $\BB{S}$ is denoted by $\CA{S}$. The set of all signal prefixes with time bound $T$ is defined as 
$
\CA{S}_T:=\left\{ s[0:T] \mid s \in \CA{S} \right\}.
$
For the convenience of notation, we use $s\in \CA{S}_T$ to say $s[0:T] \in \CA{S}_T$. 
The distance between two signals $s,s' \in \CA{S}_T$ is defined as $
\label{eq:signal-dist}
d(s, s') := \sup_{t \in [0,T]} \left\{ \|s_t-s'_t\|_\infty \right\}.
$

\subsection*{Signal Temporal Logic}

The syntax of STL is defined as follows~\cite{maler2004monitoring}:
\begin{equation*}
\phi ::= \True \ |\  \pi \ |\ \notltl \phi  \ |\ \phi_1 \andltl \phi_2 \ |\ \phi_1 \Until_{I} \phi_2 \ ,
\end{equation*}
where $\True$ is the Boolean {\em true} constant; $\pi$ is a predicate over $\BB{R}^n$ in the form of 
$f(x) \sim \mu$,
$f: \BB{S} \rightarrow \BB{R}$, $\mu \in \BB{R}$, and $\sim \in \{\le,\ge\}$; $\neg$ and $\wedge$ are the Boolean operators for negation and conjunction, respectively; and $\Until_{I}$ is the temporal operator {\em until} over bounded interval $I$. A predicate $f(s)\sim \mu$ is rectangular if $f$ is rectangular. Other Boolean operations are defined in the usual way. Additional temporal operators {\em eventually} and {\em globally} are defined as $\Event_{I} \phi \equiv \True \Until_{I} \phi$ and $\Always_{I} \phi \equiv \notltl \Event_{I} \notltl \phi$,  respectively, where $I$ is an interval. The set of all STL formulae over signals in $\CA{S}$ is denoted by $\Phi^{\CA{S}}$. 
%
The \emph{STL score}, also known as \emph{robustness degree} is a function $\rho: \CA{S} \times \Phi^{\CA{S}} \times \BB{N} \rightarrow \BB{R}$, which is recursively defined as \cite{maler2004monitoring}: 
\begin{equation}
\label{eq:quantitative-semantics}
\begin{array}{ll}
\rho(s,(f(s) \sim \mu),t) & = \begin{cases} \mu-f(s_t) & \sim=\le \\ f(s_t)-\mu & \sim=\ge \end{cases},
\\
\rho(s,\neg \phi,t)   & = -\rho(s,\phi,t),
\\
\rho(s,\phi_1 \orltl \phi_2,t)   & = \max(\rho(s,\phi_1,t),\rho(s,\phi_2,t) ),
\\
\rho(s,\phi_1 \andltl \phi_2,t)   & = \min(\rho(s,\phi_1,t),\rho(s,\phi_2,t) ),
\\
\rho(s,\phi_1~{\Until}_{I}~ \phi_2,t) & = \underset{t^\prime \in t+I} \max \big (  
 \rho(s,\phi_2,t'), \\
 &~~~~~~~ \underset{t'' \in [t,t']} \min \rho(s,\phi_1,t'')\big),
\\
\rho(s,\Event_{I}~\phi,t) & = \underset{t' \in t+I}\max ~ \rho(s,\phi,t'), \\
\rho(s,\Always_{I}~\phi,t) & = \underset{t' \in t+I}\min~  \rho(s,\phi,t').
\end{array}
\end{equation}
As one can inspect from~\eqref{eq:quantitative-semantics}, a signal \emph{satisfies} an STL specification at a certain time if and only if its corresponding STL score is positive: $
s[t] \models \phi \Leftrightarrow \rho(s,\phi,t)>0, 
$ where $\models$ is read as ``satisfies". The case of $\rho=0$ is usually left ambiguous - this is never a concern in practice due to issues with numerical precision. In this paper, we consider $\rho=0$ as satisfaction, but by doing so, we sacrifice the principle of contradiction: $s[t] \models \phi$ and $s[t] \models \neg \phi$ if $\rho(s,\phi,t)=0$. 


The horizon of an STL formula is defined as the minimum length of the time window required to compute its score, and it is recursively computed as \cite{Dokhanchi2014}:
\begin{equation}
\label{eq_horizon} 
\begin{array}{l}
\norm{\pi}=0, \norm{\phi}=\norm{\neg \phi}\\
\norm{\phi_1 \andltl \phi_2}=\norm{\phi_1 \orltl \phi_2}= \max\{\norm{\phi_1}, \norm{\phi_2}\} \\
\norm{\phi_1 \Until_{[t_1,t_2]} \phi_2}=t_2+ \max\{\norm{\phi_1}, \norm{\phi_2}\}\\
\norm{\Event_{[t_1, t_2]} \phi}=\norm{\Always_{[t_1, t_2]} \phi}=t_2+ \norm{\phi}\\
\end{array}
\end{equation}
The set of all STL formulae over signals in $\CA{S}$ such that their horizons are less than $T$ is denoted by $\Phi^{\CA{S}_T}$. 
Note that computing $\rho(s,\phi,t)$ requires $s[t:t+\norm{\phi}]$, and the rest of the values are irrelevant.  
\begin{define}[Bounded-Time Language]
Given $\phi \in \Phi^{\CA{S}_T}$, we define the bounded-time language as:
\begin{equation}
\label{eq:def-language}
\CA{L}(\phi):=\left\{ s \in \CA{S}_T \mid \rho(s,\phi,0) \ge 0 \right\}.
\end{equation}
\end{define}
Note that $\CA{L}(\phi) \subset \BB{R}^{n(T+1)}$. When the predicates are rectangular, the bounded-time language becomes a finite union of hyper-rectangles in $\BB{R}^{n(T+1)}$. 

\begin{example}
Let $\BB{S}= \BB{U}$. Consider the following six STL formulae in $\Phi^{\CA{S}_{20}}$:
\begin{equation}
\label{eq_formulas}
\begin{array}{c}
\phi_1=\Always_{[0,20]} \theta_1, \phi_2=\Always_{[0,20]} \theta_2,  \\
\phi_3=\Event_{[0,20]} \theta_1,
\phi_4=\Always_{[0,20]} \theta_1 \wedge \Event_{[0,20]} \theta_2, \\
\phi_5=(\Always_{[0,10]} \theta_1) \wedge (\Always_{[12,20]} \theta_2), \phi_6=\Always_{[0,16]} \Event_{[0,4]} \theta_1, \\
\end{array}
\end{equation}
where $\theta_1=(x \ge 0.2) \wedge (x \le 0.4), $ and $\theta_2=(x \ge 0.2) \wedge (x \le 0.44)$. We have $\norm{\phi_i}=20, i=1,\cdots,6$. Two examples of bounded-time languages are: $
\CA{L}(\phi_2)=\bigcap_{\tau=0}^{20} \{0.2 \le x_t \le 0.44 \},
\CA{L}(\phi_3)=\bigcup_{\tau=0}^{20} \{0.2 \le x_t \le 0.4 \}.
$ 
Consider two constant signals $s^1$ and $s^2$, where $s^1_t=0.3, s_t^2=t/20, t=0,1,\cdots,20$. The STL scores are computed from \eqref{eq:quantitative-semantics}. For instance, $\rho(s^1,\phi_1,0)=0.1$, $\rho(s^2,\phi_1,0)=-0.6$ (minimizer at $t=20$), and $\rho(s^2,\phi_3,0)=0.1$ (maximizer at $t=6$).
\end{example}

\section{Metrics}
\label{sec:metrics}

In this section, we introduce two functions
$\stldist: \Phi^{\CA{S}_T} \times \Phi^{\CA{S}_T} \rightarrow \BB{R}_{\ge 0}$
that quantify the dissimilarity between the properties
captured by the two STL formulae. However, it \emph{is} possible that different formulae may describe the same properties. For example, $\phi_1$ and $\phi_4$ in \eqref{eq_formulas} are describing the same behavior, since any signal that satisfies $\phi_1$ already satisfies $\phi_4$ \emph{and} vice versa. The key idea is to define the distance between two STL formulae as the distance
between their time-bounded languages. 
\begin{assumption}
The set $\BB{S} \subset \BB{R}^n$ is compact. 
\end{assumption}
\begin{assumption}
\label{assume_rectangle}
All of the predicates are rectangular.
\end{assumption}
Note that bounded-time languages are constructed in finite-dimensional Euclidean spaces. Also, since all inequalities in the predicates are non-strict, bounded-time languages are compact sets. Assumption \ref{assume_rectangle} is theoretically restrictive, but not in most applications - usually it is the case that all predicates are rectangular as they describe thresholds for state components of a system.

\begin{define}
We say that the two STL formulae $\phi_1$ and $\phi_2$ are \emph{semantically equivalent},
denoted by $\phi_1 \equiv \phi_2$, if both induce the same language: $\CA{L}(\phi_1) = \CA{L}(\phi_2)$.
\end{define}
The set of equivalence classes of $\Phi^{\CA{S}_T}$ induced by $\equiv$ is denoted by $\Phi^{\CA{S}_T}/\equiv$.
Distance functions $\stldist$ are effectively pseudo-metrics on $\Phi^{\CA{S}_T}$,
but proper metrics on $\Phi^{\CA{S}_T}/\equiv$, where
$\stldist(\eqclass{\phi_1}, \eqclass{\phi_2}) = \stldist(\phi_1, \phi_2)$
is the induced metric,
$\eqclass{\phi}$ is the equivalence class associated with $\phi$,
and $\phi_1$ and $\phi_2$ are formulae in the two equivalence classes.
Note that, by definition, there is a one-to-one map between the
equivalence classes of STL formulae and their formulae.
Moreover, for any $\phi_1, \phi_2 \in \eqclass{\phi}$, we have  $\stldist(\phi_1, \phi_2) = 0$.

We adapt two common metrics between sets:
(a) the {\em Pompeiu-Hausdorff} (\hausdorff) {\em distance} based on the underlying metric between
signals,
and (b) a measure of {\em Symmetric Difference} (\symdiff) between sets. As it will be clarified in the paper, the choice of $T$, as long as it is larger than the horizons of the formulae that are considered, does not affect the fundamental properties of the defined metrics. In the case of the \hausdorff distance, it does not have any effect at all. For the \symdiff metric, the computed distances are scaled with respect to the inverse of $T$. These details are explained in Section~\ref{sec:sd-dist-theory}.

%
%

\subsection{Pompeiu-Hausdorff Distance}
\label{sec:ph-dist-theory}
\begin{define}
The (undirected) \hausdorff distance is defined as:
\begin{equation}
d_{\hausdorff}(\phi_1,\phi_2)=\max\left\{\vec{d}_{\hausdorff}(\phi_1,\phi_2),\vec{d}_{\hausdorff}(\phi_2,\phi_1) \right\},
\end{equation}
where $\vec{d}_{\hausdorff}$ denotes the directed \hausdorff distance: 
\begin{equation}
\label{eq_directed}
\vec{d}_{\hausdorff}(\phi_1,\phi_2):=\sup_{s_1 \in \CA{L}(\phi_1)} \left\{ \inf_{s_2 \in \CA{L}(\phi_2)} d(s_1,s_2) \right\}.
\end{equation}
\end{define}
Note that the directed \hausdorff distance is obviously not a metric as it is possible to have $\vec{d}_{\hausdorff}(\phi_1,\phi_2) \neq \vec{d}_{\hausdorff}(\phi_2,\phi_1)$. We have $\vec{d}_{\hausdorff}(\phi_1,\phi_2)=0$ if and only if $\mathcal{L}(\phi_1) \subseteq \mathcal{L}(\phi_2)$. Another way to interpret the \hausdorff distance is as follows \cite{munkres2000topology}:
\begin{equation}
\label{PH_ball}
\vec{d}_{\hausdorff}(\phi_1,\phi_2)=\min\{\epsilon \mid \CA{L}(\phi_1) \subseteq \CA{L}(\phi_2)+\epsilon \CA{B}^{\CA{S}_T} \},
\end{equation}
where $\CA{B}^{\CA{S}_T}$ is the unit ball in ${\CA{S}_T}: \{s[0:T] \mid \left\|s_t\right\|_\infty \le 1, t \in [0,T] \},$ and addition of sets is interpreted in the Minkowski sense. 
In words, $\vec{d}_{\hausdorff}(\phi_1,\phi_2)$ is the radius of the minimum ball that should be added to $\CA{L}(\phi_2)$ such that it contains $\CA{L}(\phi_1)$.
\begin{proposition}
The $(\Phi^{\CA{S}_T}/\equiv, d_{\hausdorff})$ is a metric space.
\end{proposition}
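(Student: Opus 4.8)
The plan is to reduce the statement to the classical fact that the Pompeiu--Hausdorff distance is a metric on the collection of nonempty compact subsets of any metric space, and then to transport this metric through the language map $\phi \mapsto \CA{L}(\phi)$. First I would record that $(\CA{S}_T, d)$ is itself a metric space, and that by Assumption~1 together with the non-strictness of the predicate inequalities each $\CA{L}(\phi) \subset \BB{R}^{n(T+1)}$ is a nonempty compact set, exactly as noted after the Bounded-Time Language definition. Since $\CA{S}_T \cong \BB{S}^{T+1}$ is bounded, this is what makes all the distances finite and, more importantly, makes the suprema and infima in \eqref{eq_directed} well behaved.

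Next I would verify the three axioms on the quotient $\Phi^{\CA{S}_T}/\equiv$. Nonnegativity and symmetry are immediate: the quantity in \eqref{eq_directed} is a supremum of infima of the nonnegative signal metric $d$, and $d_{\hausdorff}$ is by definition the symmetric maximum of the two directed distances. For the identity of indiscernibles I would invoke the characterization already stated in the excerpt, $\vec{d}_{\hausdorff}(\phi_1,\phi_2)=0$ iff $\CA{L}(\phi_1)\subseteq \CA{L}(\phi_2)$; here compactness (hence closedness) of $\CA{L}(\phi_2)$ is precisely what upgrades ``contained in the closure'' to genuine containment. Consequently $d_{\hausdorff}(\phi_1,\phi_2)=0$ forces both inclusions, i.e.\ $\CA{L}(\phi_1)=\CA{L}(\phi_2)$, which is exactly $\phi_1\equiv\phi_2$, that is $\eqclass{\phi_1}=\eqclass{\phi_2}$. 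This is also where passing to the quotient is essential, since on $\Phi^{\CA{S}_T}$ itself the function is only a pseudo-metric.

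The main work is the triangle inequality, which I would establish first for the directed distance. Fix languages $A=\CA{L}(\phi_1)$, $B=\CA{L}(\phi_2)$, $C=\CA{L}(\phi_3)$. For arbitrary $a\in A$ and $b\in B$, the triangle inequality for $d$ gives $\inf_{c\in C} d(a,c) \le d(a,b) + \inf_{c\in C} d(b,c) \le d(a,b) + \vec{d}_{\hausdorff}(\phi_2,\phi_3)$; taking the infimum over $b\in B$ and then the supremum over $a\in A$ yields $\vec{d}_{\hausdorff}(\phi_1,\phi_3) \le \vec{d}_{\hausdorff}(\phi_1,\phi_2) + \vec{d}_{\hausdorff}(\phi_2,\phi_3)$. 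Writing the analogous inequality in the reverse direction and combining the two through the elementary bound $\max\{x_1+y_1,\,x_2+y_2\}\le \max\{x_1,x_2\}+\max\{y_1,y_2\}$ gives $d_{\hausdorff}(\phi_1,\phi_3)\le d_{\hausdorff}(\phi_1,\phi_2)+d_{\hausdorff}(\phi_2,\phi_3)$.

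The genuinely delicate points, which I expect to be the real obstacles rather than the routine axiom-chasing, are twofold. First, the Pompeiu--Hausdorff distance is a metric only on \emph{nonempty} compact sets, so the argument tacitly assumes each formula under consideration is satisfiable so that $\CA{L}(\phi)\neq\emptyset$; this should be flagged explicitly. Second, the step $\vec{d}_{\hausdorff}=0 \Rightarrow$ set inclusion is false for arbitrary sets and relies essentially on the compactness guaranteed by Assumptions~1 and~\ref{assume_rectangle}. Everything else is the standard verification that $d_{\hausdorff}$ inherits the metric structure of the Pompeiu--Hausdorff distance, transported along the bijection between the equivalence classes $\eqclass{\phi}$ and their languages $\CA{L}(\phi)$.
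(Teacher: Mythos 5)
Your proof is correct and follows essentially the same route as the paper's: both reduce the claim to the classical fact that the Pompeiu--Hausdorff distance is a metric on nonempty compact subsets of a metric space, transported through the one-to-one correspondence between equivalence classes $\eqclass{\phi}$ and their languages $\CA{L}(\phi)$. The only difference is that you spell out the standard axiom verification (and rightly flag the nonemptiness and compactness caveats) where the paper simply cites \cite{munkres2000topology}.
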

\begin{proof}
Note that $d_{\hausdorff}(\phi_1,\phi_2)$ is effectively defined as $d_{\hausdorff}(\CA{L}(\phi_1),\CA{L}(\phi_2))$ - remember that languages are compact subsets of finite dimensional Euclidean space, for which it is known that the \hausdorff distance is a metric \cite{munkres2000topology}. Moreover, there is a one to one map between an equivalency class formula of a formula in $\Phi^{\CA{S}_T}/\equiv$ and its language. 
\end{proof}

It is possible to interpret \eqref{eq_directed} as the distance between an STL formula and a signal: $ 
\vec{d}_{\hausdorff}(s,\phi):= \min_{s' \in \CA{L}(\phi)} d(s,s')$. It is easy to see that we have $\vec{d}_{\hausdorff}(s,\phi)=0$ if and only if $s\in \CA{L}(\phi)$. The following result is a reformulation of Definition~23 in \cite{fainekos2009robustness}, which establishes a connection between the STL score and the notion of \emph{signed distance}. 
\begin{proposition} 
Given any $\phi \in \Phi^{\CA{S}_T}$ and $s \in {\CA{S}_T}$, the STL score is a signed distance in the sense that:
\begin{equation}
\rho(s,\phi,0) = \left\{ \begin{array}{ll} 
-\vec{d}_{PH}(s,\phi) & \vec{d}_{PH}(s,\phi)>0, \\
\vec{d}_{PH}(s,\neg \phi) & \vec{d}_{PH}(s,\phi)=0.
\end{array} \right.
\end{equation}
\end{proposition}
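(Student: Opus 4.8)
The plan is to reduce the two-case statement to a single core claim and to prove that claim through the $\ell_\infty$ geometry that Assumption~\ref{assume_rectangle} makes available. First I would observe that, viewing each prefix $s \in \CA{S}_T$ as a point of $\BB{R}^{n(T+1)}$, the signal metric $d(s,s')=\sup_{t}\norminf{s_t-s'_t}$ coincides exactly with the $\ell_\infty$ norm on $\BB{R}^{n(T+1)}$, so that $\vec{d}_{PH}(s,\phi)=\min_{s'\in\CA{L}(\phi)}d(s,s')$ is nothing but the $\ell_\infty$ distance from the point $s$ to the set $\CA{L}(\phi)$. Recalling that $\vec{d}_{PH}(s,\phi)=0 \Leftrightarrow s\in\CA{L}(\phi) \Leftrightarrow \rho(s,\phi,0)\ge 0$, the condition distinguishing the two cases is exactly whether $s$ lies inside or outside $\CA{L}(\phi)$.

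The core claim I would isolate is: for every $\psi$ with $\rho(s,\psi,0)<0$ (equivalently $s\notin\CA{L}(\psi)$), one has $\vec{d}_{PH}(s,\psi)=-\rho(s,\psi,0)$. Granting this, the proposition follows by case analysis. In the first case ($\vec{d}_{PH}(s,\phi)>0$) the claim applied to $\psi=\phi$ gives $\rho(s,\phi,0)=-\vec{d}_{PH}(s,\phi)$ directly. In the second case ($\vec{d}_{PH}(s,\phi)=0$, i.e.\ $\rho(s,\phi,0)\ge 0$), if $\rho(s,\phi,0)>0$ then $\rho(s,\neg\phi,0)=-\rho(s,\phi,0)<0$, so the claim applied to $\psi=\neg\phi$ yields $\vec{d}_{PH}(s,\neg\phi)=-\rho(s,\neg\phi,0)=\rho(s,\phi,0)$; and if $\rho(s,\phi,0)=0$ then $\rho(s,\neg\phi,0)=0\ge 0$, so $s\in\CA{L}(\neg\phi)$ and both sides equal $0$. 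This last subcase is precisely where the convention that treats $\rho=0$ as satisfaction of both $\phi$ and $\neg\phi$ is used.

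To establish the core claim I would prove two matching inequalities. For the lower bound $\vec{d}_{PH}(s,\psi)\ge -\rho(s,\psi,0)$, I would first show by structural induction that $\rho(\cdot,\psi,t)$ is $1$-Lipschitz with respect to $d$: the base case holds because a rectangular predicate contributes $\pm(s^i_t-\mu)$, whose variation is bounded by $\norminf{s_t-s'_t}\le d(s,s')$, and the inductive step holds because $\min$, $\max$, and the time shifts appearing in \eqref{eq:quantitative-semantics} are all nonexpansive. Then for any $s'\in\CA{L}(\psi)$ we have $\rho(s',\psi,0)\ge 0$, so $d(s,s')\ge \rho(s',\psi,0)-\rho(s,\psi,0)\ge -\rho(s,\psi,0)$, and taking the infimum gives the bound. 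For the matching upper bound I would exploit Assumption~\ref{assume_rectangle}: $\CA{L}(\psi)$ is a finite union of axis-aligned boxes, and the $\ell_\infty$ distance from a point to such a union is obtained coordinatewise, as the $\min$ over boxes of the $\max$ over coordinates of the per-interval distances. I would then check that this explicit distance recurses through $\neg$, $\wedge/\vee$, and the temporal operators in lockstep with $\rho$ (union $\leftrightarrow$ $\max$, intersection $\leftrightarrow$ $\min$), producing an explicit nearest point $s^*\in\CA{L}(\psi)$ with $d(s,s^*)=-\rho(s,\psi,0)$.

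The main obstacle is exactly this upper bound, namely showing the Lipschitz estimate is tight. The delicate point is the conjunction/intersection (and hence $\Always$/$\Until$) step: the identity $\mathrm{dist}(s,A\cap B)=\max\{\mathrm{dist}(s,A),\mathrm{dist}(s,B)\}$ is special to $\ell_\infty$ and to axis-aligned boxes and is false for general convex sets, so it is precisely Assumption~\ref{assume_rectangle} that forces $\rho$ to coincide with the signed distance, and the argument tacitly requires $\CA{L}(\psi)\ne\emptyset$ (a satisfiable formula) for the intersection to be a nonempty box. Alternatively, since this statement is a reformulation of Definition~23 in \cite{fainekos2009robustness}, one may invoke their signed-distance characterization $\rho(s,\phi,0)=\mathbf{Dist}(s,\CA{L}(\phi))$ directly and obtain the two cases by noting that the closure of the complement of $\CA{L}(\phi)$ equals $\CA{L}(\neg\phi)$, so that the ``inside'' branch of the signed distance is $\inf_{s'\in\CA{L}(\neg\phi)}d(s,s')=\vec{d}_{PH}(s,\neg\phi)$.
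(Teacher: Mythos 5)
Your decomposition into a core claim plus case analysis is clean, and the Lipschitz half of your argument is correct and complete: rectangular predicates are $1$-Lipschitz with respect to the signal metric $d$, the operators in \eqref{eq:quantitative-semantics} are nonexpansive, and hence $\vec{d}_{PH}(s,\psi)\ge-\rho(s,\psi,0)$ whenever $\rho(s,\psi,0)<0$. For comparison, the paper offers no proof at all: it presents the proposition as a reformulation of Definition~23 of \cite{fainekos2009robustness}, so you are attempting strictly more than the authors do.

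However, the step you yourself flag as ``the main obstacle'' --- the matching upper bound --- is a genuine gap, not a computation left to the reader, and requiring $\CA{L}(\psi)\ne\emptyset$ does not repair it. Take $n=1$, $\BB{S}=[0,10]$, $\phi=\big((x\ge 0)\wedge(x\le 3)\big)\vee\big((x\ge 2)\wedge(x\le 5)\big)$, and any $s$ with $s_0=2.5$. Then $\rho(s,\phi,0)=\max\{\min\{2.5,0.5\},\min\{0.5,2.5\}\}=0.5$, while $\CA{L}(\neg\phi)$ constrains $s_0$ to $\{0\}\cup[5,10]$, so $\vec{d}_{PH}(s,\neg\phi)=2.5$: the second branch of the proposition, i.e.\ your core claim applied to $\psi=\neg\phi$ (whose language is nonempty and even full-dimensional), fails. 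The algebraic culprit is exactly the identity you worry about: writing $\CA{L}(\psi)$ as $\bigcup_{i,j}(A_i\cap B_j)$ after distributing, the equality $\mathrm{dist}(s,\bigcup_{i,j}(A_i\cap B_j))=\min_{i,j}\max\{\mathrm{dist}(s,A_i),\mathrm{dist}(s,B_j)\}$ needs \emph{every} branch $A_i\cap B_j$ to be nonempty, and here $[3,10]\cap[0,2]=\emptyset$ is precisely the pair at which the lattice-distributed expression computed by $\rho$ attains its minimum. So the recursive score only under-approximates the distance --- which is what \cite{fainekos2009robustness} actually prove about their robustness \emph{estimate} relative to the signed-distance \emph{degree} of Definition~23 --- and exact equality needs an extra structural hypothesis (no disjunctions, or all branches of the distributed form satisfiable) that neither you nor the paper states. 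Your fallback of citing Fainekos--Pappas directly runs into the same issue, since their theorem gives an inequality, not the claimed equality.
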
  
The following results are extensions of classical results for signed distances \cite{Kraft2015}.
\begin{corollary}
\label{corollary_signed}
For any given two formulae $\phi_1,\phi_2 \in \Phi^{\CA{S}_T}$ and a signal $s\in \CA{S}_T$, we have the following inequalities:
\begin{equation*}
\begin{array}{c}
\big||\rho(s,\phi_1,0)|-|\rho(s,\phi_2,0)|\big| \le d_{\hausdorff}(\phi_1,\phi_2) \\
\!\!\!\big|\rho(s,\phi_1,0)-\rho(s,\phi_2,0)\big| \le d_{\hausdorff}(\phi_1,\phi_2)+d_{\hausdorff}(\neg \phi_1,\neg \phi_2)
\end{array}
\end{equation*}
\end{corollary}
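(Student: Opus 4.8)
The plan is to reduce both inequalities to the signed-distance representation of the robustness degree supplied by the preceding Proposition, and then to lean on the elementary fact that the point-to-set distance $s\mapsto \inf_{s'\in K}d(s,s')$ is $1$-Lipschitz and that its largest fluctuation over two compact sets equals their Pompeiu--Hausdorff distance. Fixing $s\in\CA{S}_T$, I would abbreviate $a_i:=\vec{d}_{\hausdorff}(s,\phi_i)=\inf_{s'\in\CA{L}(\phi_i)}d(s,s')$ and $b_i:=\vec{d}_{\hausdorff}(s,\neg\phi_i)$. The Proposition then reads $\rho(s,\phi_i,0)=b_i-a_i$ with $\min(a_i,b_i)=0$, so that $|\rho(s,\phi_i,0)|=a_i+b_i=\max(a_i,b_i)$.

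The workhorse lemma I would establish first is: for compact $K_1,K_2\subset\CA{S}_T$ and every $s$, $|\inf_{s'\in K_1}d(s,s')-\inf_{s'\in K_2}d(s,s')|\le d_{\hausdorff}(K_1,K_2)$. This is immediate from the ball characterization (\ref{PH_ball})---each point of $K_1$ lies within $\vec{d}_{\hausdorff}(K_1,K_2)$ of $K_2$ and conversely, while $s\mapsto d(s,\cdot)$ is $1$-Lipschitz. Applied to $K_i=\CA{L}(\phi_i)$ it yields $|a_1-a_2|\le d_{\hausdorff}(\phi_1,\phi_2)$, and to $K_i=\CA{L}(\neg\phi_i)$ it yields $|b_1-b_2|\le d_{\hausdorff}(\neg\phi_1,\neg\phi_2)$. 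The second inequality then falls out of the triangle inequality applied to $\rho(s,\phi_i,0)=b_i-a_i$: $|\rho(s,\phi_1,0)-\rho(s,\phi_2,0)|=|(b_1-b_2)-(a_1-a_2)|\le|b_1-b_2|+|a_1-a_2|\le d_{\hausdorff}(\phi_1,\phi_2)+d_{\hausdorff}(\neg\phi_1,\neg\phi_2)$, exactly as claimed. This half is routine.

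The first inequality is where I expect the real difficulty. Using $|\rho(s,\phi_i,0)|=\max(a_i,b_i)$ together with $|\max(u,v)-\max(u',v')|\le\max(|u-u'|,|v-v'|)$ gives $\big||\rho(s,\phi_1,0)|-|\rho(s,\phi_2,0)|\big|\le\max\{d_{\hausdorff}(\phi_1,\phi_2),\,d_{\hausdorff}(\neg\phi_1,\neg\phi_2)\}$. A case split on the signs confirms the stated single-term bound in the regime where $s$ violates both formulae (there $|\rho(s,\phi_i,0)|=a_i$), but in the regime where $s$ satisfies both, the same computation only controls $\big||\rho(s,\phi_1,0)|-|\rho(s,\phi_2,0)|\big|=|b_1-b_2|$ through $d_{\hausdorff}(\neg\phi_1,\neg\phi_2)$. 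Since $|\rho(s,\phi_i,0)|=d(s,\partial\CA{L}(\phi_i))$, collapsing this to $d_{\hausdorff}(\phi_1,\phi_2)$ alone is equivalent to the pointwise boundary estimate $|d(s,\partial\CA{L}(\phi_1))-d(s,\partial\CA{L}(\phi_2))|\le d_{\hausdorff}(\phi_1,\phi_2)$.

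This boundary estimate is the main obstacle, and I would be cautious here: a language shaped like a rectangle with a thin, deep notch has its boundary displaced by an $O(1)$ amount under an $O(\varepsilon)$ Pompeiu--Hausdorff perturbation of the set itself, which suggests the single-term bound cannot hold without extra structure. I would therefore either (a) prove the boundary estimate under an additional structural hypothesis on $\CA{L}(\phi_1),\CA{L}(\phi_2)$ that forbids such notches, or (b) restate the first inequality with the two-term right-hand side $\max\{d_{\hausdorff}(\phi_1,\phi_2),d_{\hausdorff}(\neg\phi_1,\neg\phi_2)\}$, which the $\max$-Lipschitz argument above already establishes unconditionally. Identifying which of these the intended statement relies on is the crux of the proof.
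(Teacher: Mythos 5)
The paper does not actually prove this corollary: it is stated as an ``extension of classical results for signed distances'' with a citation, and no argument is supplied. So there is no official proof to compare you against; the question is only whether your reconstruction is sound. Your treatment of the second inequality is correct and complete: writing $\rho(s,\phi_i,0)=b_i-a_i$ with $a_i=\vec{d}_{\hausdorff}(s,\phi_i)$, $b_i=\vec{d}_{\hausdorff}(s,\neg\phi_i)$ and $\min(a_i,b_i)=0$ (which is exactly what the preceding Proposition gives), and invoking the standard fact that $s\mapsto\inf_{s'\in K}d(s,s')$ changes by at most $d_{\hausdorff}(K_1,K_2)$ when $K_1$ is replaced by $K_2$, the triangle inequality yields the stated two-term bound. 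That is surely the intended argument.

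Your hesitation about the first inequality is not excessive caution --- the single-term bound is false as printed, and your ``notch'' heuristic becomes a concrete counterexample inside the paper's own setting. Take $T=0$, $\BB{S}=\BB{U}$, $\phi_1=(x\ge 0)\wedge(x\le 1)$ and $\phi_2=(x\le 0.45)\vee(x\ge 0.55)$, so that $\CA{L}(\phi_1)=[0,1]$, $\CA{L}(\phi_2)=[0,0.45]\cup[0.55,1]$, and $d_{\hausdorff}(\phi_1,\phi_2)=0.05$ (only points near $0.5$ must move). For $s=0.6$ one computes $\rho(s,\phi_1,0)=0.4$ and $\rho(s,\phi_2,0)=0.05$, hence $\big||\rho(s,\phi_1,0)|-|\rho(s,\phi_2,0)|\big|=0.35>0.05$. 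The classical Lipschitz estimate controls only the outward distances $a_i$; the inward depths $b_i$ are controlled by $d_{\hausdorff}(\neg\phi_1,\neg\phi_2)$ (here $0.5$), not by $d_{\hausdorff}(\phi_1,\phi_2)$. The strongest unconditional statement is therefore the one your max-Lipschitz computation already delivers, namely $\big||\rho(s,\phi_1,0)|-|\rho(s,\phi_2,0)|\big|\le\max\{d_{\hausdorff}(\phi_1,\phi_2),\,d_{\hausdorff}(\neg\phi_1,\neg\phi_2)\}$, and the printed first inequality holds only under extra structure --- e.g., when $s$ falsifies at least one of the two formulae, or when both languages are convex, so that dilation by $\epsilon\CA{B}^{\CA{S}_T}$ and the corresponding erosion are inverse to each other. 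Your repair (b) is the correct unconditional statement, and convexity of the languages is the cleanest sufficient hypothesis for repair (a); since STL admits disjunction and eventually, neither holds for general formulae in $\Phi^{\CA{S}_T}$.
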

\begin{corollary}
\label{corollary_neighborhood}
Given $\epsilon>0$, define $\epsilon$-neighborhood of an STL formula $\phi$ as $\{\phi\}_\epsilon=\left\{\phi' \in \Phi^{\CA{S}_T} | d_{\hausdorff}(\phi,\phi') \le \epsilon \right\}$. Then, $\rho(s,\phi,0) \ge \epsilon$ implies that $s \models \phi', \forall \phi' \in \{\phi\}_\epsilon$.
\end{corollary}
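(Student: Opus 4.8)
The goal is to show $\rho(s,\phi',0)\ge 0$ for every $\phi'$ with $d_{\hausdorff}(\phi,\phi')\le\epsilon$, since under the paper's convention $s\models\phi'$ is equivalent to $\rho(s,\phi',0)\ge 0$. The plan is to transfer the robustness margin of $s$ from $\phi$ to $\phi'$ using the signed-distance machinery, so first I would unpack the hypothesis. By the preceding proposition relating the STL score to the signed distance, $\rho(s,\phi,0)\ge\epsilon>0$ forces $\vec{d}_{PH}(s,\phi)=0$, i.e. $s\in\CA{L}(\phi)$, and moreover $\rho(s,\phi,0)=\vec{d}_{PH}(s,\neg\phi)\ge\epsilon$. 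In words, $s$ sits inside $\CA{L}(\phi)$ at signed depth at least $\epsilon$: no signal within distance $\epsilon$ of $s$ belongs to $\CA{L}(\neg\phi)$.

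The central step is to compare the two scores at the fixed signal $s$ through Corollary~\ref{corollary_signed}. Its first inequality gives $\big||\rho(s,\phi,0)|-|\rho(s,\phi',0)|\big|\le d_{\hausdorff}(\phi,\phi')\le\epsilon$, whence $|\rho(s,\phi',0)|\ge|\rho(s,\phi,0)|-\epsilon=\rho(s,\phi,0)-\epsilon\ge 0$. This controls only the magnitude of $\rho(s,\phi',0)$ and not its sign, so it is not yet conclusive; the remaining task is to rule out $\rho(s,\phi',0)<0$. I would do so by combining the two directed inclusions encoded in $d_{\hausdorff}(\phi,\phi')\le\epsilon$, namely $\CA{L}(\phi)\subseteq\CA{L}(\phi')+\epsilon\CA{B}^{\CA{S}_T}$ together with its reverse, with the $\epsilon$-depth of $s$ in $\CA{L}(\phi)$ obtained above: intuitively, a point lying $\epsilon$-deep in $\CA{L}(\phi)$ should be forced into $\CA{L}(\phi')$ once the two languages are within Hausdorff distance $\epsilon$, which gives $\rho(s,\phi',0)\ge 0$.

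I expect this sign/containment step to be the real obstacle. The magnitude bound from Corollary~\ref{corollary_signed} is genuinely insufficient on its own, and the delicate feature is that STL languages are typically non-convex -- disjunction and the eventually operator routinely produce thin ``holes'' whose width is comparable to $\epsilon$. Ruling out that $s$ sits in such a hole of $\CA{L}(\phi')$ is exactly where the argument must do work, and it is where one should need the full strength of $\rho(s,\phi,0)\ge\epsilon$ (rather than merely $s\in\CA{L}(\phi)$) together with compactness of the languages; I would look carefully at whether the non-strict inequalities in the hypotheses are tight enough to make the transfer go through, since this is precisely the boundary case where the claim is most fragile.
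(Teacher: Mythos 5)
The paper offers no proof of this corollary --- it is asserted as an ``extension of classical results for signed distances'' --- so your proposal must stand on its own. Your first two steps are sound and are surely what the authors intended: reduce to showing $\rho(s,\phi',0)\ge 0$, extract $s\in\CA{L}(\phi)$ and $\vec{d}_{PH}(s,\neg\phi)=\rho(s,\phi,0)\ge\epsilon$ from the signed-distance proposition, and observe that the first inequality of Corollary~\ref{corollary_signed} only bounds $|\rho(s,\phi',0)|$ and cannot control its sign. The problem is the step you defer to the end. The claim that ``a point lying $\epsilon$-deep in $\CA{L}(\phi)$ is forced into $\CA{L}(\phi')$ once the two languages are within Hausdorff distance $\epsilon$'' is not just the hard part of the argument --- it is false, and with it the corollary as literally stated. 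Take $n=1$, $\BB{S}=\BB{U}$, $\phi=(x\ge 0)\wedge(x\le 1)$, $\phi'=(x\le 0.2)\vee(x\ge 0.9)$ (horizon $0$, any $T$). Then $\CA{L}(\phi')\subseteq\CA{L}(\phi)$ and $\vec{d}_{PH}(\phi,\phi')=0.35$, so $d_{PH}(\phi,\phi')=0.35$ and $\phi'\in\{\phi\}_{0.35}$; the constant signal $s_t=0.55$ has $\rho(s,\phi,0)=0.45\ge 0.35=\epsilon$, yet $\rho(s,\phi',0)=-0.35<0$, so $s\not\models\phi'$. The inclusion $\CA{L}(\phi)\subseteq\CA{L}(\phi')+\epsilon\CA{B}^{\CA{S}_T}$ only guarantees that \emph{some} signal within $\epsilon$ of $s$ lies in $\CA{L}(\phi')$, i.e.\ $\rho(s,\phi',0)\ge-\epsilon$; the depth of $s$ inside $\CA{L}(\phi)$ says nothing about whether $s$ sits in a hole of $\CA{L}(\phi')$, and a hole of width $2\epsilon$ is perfectly compatible with $d_{PH}\le\epsilon$. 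Your closing paragraph identifies exactly this failure mode, but treats it as an obstacle to be overcome rather than what it is: a counterexample.

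What \emph{can} be salvaged along your lines comes from the second inequality of Corollary~\ref{corollary_signed}: $\rho(s,\phi',0)\ge\rho(s,\phi,0)-d_{PH}(\phi,\phi')-d_{PH}(\neg\phi,\neg\phi')$, so the conclusion holds if the neighborhood is defined by $d_{PH}(\phi,\phi')+d_{PH}(\neg\phi,\neg\phi')\le\epsilon$ (equivalently, one needs the complement inclusion $\CA{L}(\neg\phi')\subseteq\CA{L}(\neg\phi)+\epsilon\CA{B}^{\CA{S}_T}$, plus strictness $\rho(s,\phi,0)>\epsilon$ to handle the boundary case you flagged). Alternatively, combining $\rho(s,\phi',0)\ge -d_{PH}(\phi,\phi')$ with your magnitude bound shows that $\rho(s,\phi,0)>2\,d_{PH}(\phi,\phi')$ suffices. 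A complete write-up must strengthen the hypothesis in one of these ways (or restrict to languages without interior holes); under the hypotheses as stated, there is no proof to be found.
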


\subsection{Symmetric Difference}
\label{sec:sd-dist-theory}

The \symdiff is denoted by $\triangle$, and defined as $X\triangle Y = (X\setminus Y) \cup (Y\setminus X)$,
where $X$ and $Y$ are two sets.
It induces a distance between compact sets as the measure of the \symdiff~\cite{conway2012course}.

\journal{
As we mentioned before, we avoid defining a measure over the function space of signals $\CA{S}$, and instead we project them into the space-time set $\BB{S} \times [0, \norm{\phi})$ for some given STL formula $\phi$.
%
%
Next, we define the coverage of an STL formula, and characterize
it for a general case of practical interest.

Let $\Proj: \Phi^{\CA{S}} \to \BB{S} \times \BB{T}$ map STL formulae
to the sets of space-time values that satisfying signals cover.
Formally, we have
$\Proj(\phi) = \bigcup_{s\models \phi} \bigcup_{t \in \left[0, \norm{\phi}\right)} \{(s(t), t)\}$.

\begin{define}
\label{def:symm-diff-dist}
The \symdiff pseudo-metric is defined as:
\begin{equation*}
d_{\symdiff}(\varphi_1, \varphi_2) = | \Proj(\varphi_1) \triangle \Proj(\varphi_2) |,
\end{equation*}
where the measure $|\cdot|$ is the Lebesgue measure.
\end{define}

\begin{theorem}[Soundness]
\label{th:dist-equivalence}
If $\phi_1$ and $\phi_2$ are two STL formulae with the same language,
then they cover the same space.
Formally, we have $\CA{L}(\phi_1) = \CA{L}(\phi_2)$ implies
$\Proj(\phi_1) = \Proj(\phi_2)$,
and thus $d_{\symdiff}(\phi_1, \phi_2) = 0$.
\end{theorem}
\begin{proof}
The proof is trivial since $\Proj$ is essentially a projection
of the language of an STL formula $\phi$ onto space-time
$\BB{S} \times [0, \norm{\phi})$.
\end{proof}

}

\begin{define}
\label{def:symm-diff-dist}
The \symdiff metric is defined as:
\begin{equation*}
d_{\symdiff}(\varphi_1, \varphi_2) = \frac{1}{\maxhorz+1} | \CA{L}(\varphi_1) \triangle \CA{L}(\varphi_2) |,
\end{equation*}
where $|\cdot|$ is the Lebesgue measure.
\end{define}

\begin{proposition}
The $(\Phi^{\CA{S}_T}/\equiv, d_{\symdiff})$ is a metric space.
\end{proposition}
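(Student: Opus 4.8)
The plan is to reduce the three metric axioms for $(\Phi^{\CA{S}_T}/\equiv, d_{\symdiff})$ to the classical fact that the Lebesgue measure of the symmetric difference is a metric on equivalence classes of measurable sets, exactly as the \hausdorff proposition was reduced to the known metric property of the \hausdorff distance. Since $d_{\symdiff}(\varphi_1,\varphi_2)$ depends on the formulae only through their languages $\CA{L}(\varphi_1),\CA{L}(\varphi_2)$, and $\tfrac{1}{\maxhorz+1}>0$ is a fixed positive constant, it suffices to verify that $A \mapsto |A \triangle B|$ satisfies the axioms on the family of languages; the constant merely rescales all distances uniformly and preserves each axiom. Non-negativity is inherited from the Lebesgue measure, and symmetry is immediate from $A \triangle B = B \triangle A$, giving $d_{\symdiff}(\varphi_1,\varphi_2)=d_{\symdiff}(\varphi_2,\varphi_1)$.

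For the triangle inequality I would record the elementary set inclusion $A \triangle C \subseteq (A \triangle B)\cup(B \triangle C)$, valid for arbitrary sets. Applying monotonicity and finite subadditivity of the Lebesgue measure yields $|A \triangle C| \le |A \triangle B| + |B \triangle C|$, and substituting $A=\CA{L}(\varphi_1)$, $B=\CA{L}(\varphi_2)$, $C=\CA{L}(\varphi_3)$ and multiplying through by $\tfrac{1}{\maxhorz+1}$ gives $d_{\symdiff}(\varphi_1,\varphi_3) \le d_{\symdiff}(\varphi_1,\varphi_2)+d_{\symdiff}(\varphi_2,\varphi_3)$. This step is routine once the inclusion is in place.

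The step I expect to be the main obstacle is the identity of indiscernibles on the quotient, namely $d_{\symdiff}(\varphi_1,\varphi_2)=0 \Rightarrow \eqclass{\varphi_1}=\eqclass{\varphi_2}$. Here $\equiv$ demands the \emph{set} equality $\CA{L}(\varphi_1)=\CA{L}(\varphi_2)$, whereas $d_{\symdiff}=0$ only gives $|\CA{L}(\varphi_1)\triangle\CA{L}(\varphi_2)|=0$, i.e. equality up to a Lebesgue-null set. To bridge this gap I would invoke Assumption~\ref{assume_rectangle}: by the remark after Definition~1, each $\CA{L}(\varphi_i)$ is a finite union of closed hyper-rectangles in $\BB{R}^{n(\maxhorz+1)}$. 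When these are full-dimensional, the language is a regular closed set, and for such sets a null symmetric difference forces equality: if $\text{int}(\CA{L}(\varphi_1))\setminus\CA{L}(\varphi_2)$ were nonempty it would be open, hence of positive measure, contradicting $|\CA{L}(\varphi_1)\triangle\CA{L}(\varphi_2)|=0$; thus $\text{int}(\CA{L}(\varphi_1))\subseteq\CA{L}(\varphi_2)$, and taking closures gives $\CA{L}(\varphi_1)\subseteq\CA{L}(\varphi_2)$, with the reverse inclusion by symmetry. The converse, $\eqclass{\varphi_1}=\eqclass{\varphi_2}\Rightarrow d_{\symdiff}=0$, is immediate.

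The genuinely delicate point I would flag is the possibility of \emph{degenerate} hyper-rectangles: non-strict predicates that are satisfiable with equality (e.g. a conjunction forcing $x^i=\mu$) produce lower-dimensional, measure-zero pieces of $\CA{L}$. Two languages differing only on such pieces have vanishing symmetric-difference measure yet are distinct sets, so the argument above closes cleanly only for regular closed languages. The honest resolution is therefore either to restrict attention to non-degenerate (full-dimensional) predicates, or to read $\equiv$ as equality of the regular-closed representatives; I would state this restriction explicitly so that $d_{\symdiff}$ is a bona fide metric rather than merely a pseudo-metric on $\Phi^{\CA{S}_T}/\equiv$.
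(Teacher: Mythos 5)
Your proof is correct and considerably more thorough than the paper's, which simply asserts that the result ``follows immediately from the definition'' and only argues well-definedness: the languages are compact subsets of $\BB{R}^{n(\maxhorz+1)}$, so the Lebesgue measure applies. Your explicit verification of symmetry and of the triangle inequality via $A \triangle C \subseteq (A\triangle B)\cup(B\triangle C)$ is exactly what the paper leaves implicit. More importantly, the obstacle you flag for the identity of indiscernibles is genuine and is not addressed by the paper at all: two formulae whose languages differ only on a Lebesgue-null set (e.g.\ a conjunction of non-strict predicates forcing $x^i=\mu$, whose language is a lower-dimensional slice of $\BB{R}^{n(\maxhorz+1)}$, versus a formula with empty language) lie in distinct classes of $\equiv$ yet are at $d_{\symdiff}$-distance zero, so as stated $d_{\symdiff}$ is only a pseudo-metric on $\Phi^{\CA{S}_T}/\equiv$. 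Your interior/closure argument for the full-dimensional (regular closed) case is sound, and your proposed fixes --- excluding degenerate predicates, or coarsening $\equiv$ to equality of regular-closed representatives (equivalently, equality up to null sets) --- are precisely what is needed to make the proposition literally true. In short: same measure-theoretic reduction in spirit, but you supply the actual verification and correctly identify a degenerate case in which the paper's statement, read strictly, fails.
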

\begin{proof}
Follows immediately from the definition.
The metric is well-defined since the formulae have
time horizons bounded by $\maxhorz$ over discrete-time signals.
Their languages are compact subsets of the Euclidean space
$\BB{R}^{n(\maxhorz+1)}$.
Thus, the Lebesgue measure is defined.
\end{proof}

We define the coverage of signal sets in the space-time value set. 
Formally, we have the map $\Proj: 2^{\CA{S}_T} \to \BB{S} \times T\BB{U} $ such that
$\Proj(S) = \bigcup_{s\in S} \bigcup_{t \in [0, T]} \{(s_t, \tau) \mid t \leq \tau \leq t+1\}$, where $S \subseteq \CA{S}$.
For an STL formula $\phi$, $\Proj(\phi) = \Proj(\CA{L}(\phi))$.

Let $\CA{S}=\BB{U}^n$, and $p = x_i \leq \mu$ be a rectangular predicate with
$\mu \in \BB{U}$ and $i\in \{1,\ldots, n\}$.
The coverage of $p$ is
$\Proj(p) = \left((\BB{U}^{i-1} \times \mu\BB{U} \times \BB{U}^{n-i}) \times \BB{U}\right) \cup \left(\BB{U}^n \times \{1 \leq t \leq \maxhorz\} \right)$.

\begin{theorem}
\label{th:proj-soundness}
If $\phi_1$ and $\phi_2$ are two STL formulae with the same language,
then they cover the same space.
Formally, we have $\CA{L}(\phi_1) = \CA{L}(\phi_2)$ implies
$\Proj(\phi_1) = \Proj(\phi_2)$.
\end{theorem}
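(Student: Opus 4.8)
The plan is to recognize that this statement is, by construction, almost immediate: coverage of a formula is \emph{defined} through its language, so it cannot depend on the formula's syntax. The crucial observation is that $\Proj$ is introduced as a genuine set-valued map on $2^{\CA{S}_T}$, sending any $S \subseteq \CA{S}_T$ to $\Proj(S) = \bigcup_{s\in S} \bigcup_{t \in [0, T]} \{(s_t, \tau) \mid t \leq \tau \leq t+1\}$, and that for a formula the coverage is stipulated to be $\Proj(\phi) = \Proj(\CA{L}(\phi))$. Thus coverage factors through the language, and the theorem amounts to saying that a well-defined function respects equality of its inputs.

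First I would check that $\Proj$ is well-defined as a map on $2^{\CA{S}_T}$: its value on $S$ is a union indexed over the \emph{elements} of $S$, and therefore depends only on $S$ as a set, not on any particular description or representation of it. In particular, $S_1 = S_2$ forces $\Proj(S_1) = \Proj(S_2)$. Then the theorem follows in a single line by applying $\Proj$ to both sides of the hypothesized set equality between the languages:
\begin{equation*}
\Proj(\phi_1) = \Proj(\CA{L}(\phi_1)) = \Proj(\CA{L}(\phi_2)) = \Proj(\phi_2),
\end{equation*}
where the outer equalities are the definition of coverage for formulae and the middle equality is the assumption $\CA{L}(\phi_1) = \CA{L}(\phi_2)$.

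There is essentially no obstacle here; the only care required is conceptual rather than technical. The point of stating the result is soundness: it certifies that coverage, and hence the \symdiff metric $d_{\symdiff}$ built from it, is an invariant of the semantic equivalence class $\eqclass{\phi}$ rather than of the syntactic form of $\phi$. This is precisely the property needed for $d_{\symdiff}$ to descend to a well-defined function on $\Phi^{\CA{S}_T}/\equiv$, matching the corresponding claim already made for the \hausdorff distance.
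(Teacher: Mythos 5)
Your proof is correct and is essentially the paper's own argument: the paper likewise notes that $\Proj(\phi)$ is by definition a function of $\CA{L}(\phi)$ alone, so equal languages immediately yield equal coverage. You have simply spelled out the one-line factorization $\Proj(\phi) = \Proj(\CA{L}(\phi))$ more explicitly.
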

\begin{proof}
Immediately follows from the fact that $\Proj$ is a projection of $\CA{L}(\phi)$ onto space-time
$\BB{S} \times T\BB{U}$.
\end{proof}

Note that the converse is not true in general.
In particular, it can fail for formulae containing disjunctions.

\journal{
\begin{define}
Let $(\Omega, \CA{F}, \lambda)$ be a  measure space,
the measurable space $(\BB{S}, \Sigma)$,
and $X$ be a random variable with index set $\BB{T}$.
The measure on the space of signals $\CA{S}$ is defined as:
\begin{eqnarray}
\lambda^\CA{S}(S) = \inf_{\Omega' \in \CA{F} \land X(\Omega', \BB{T}) \subseteq S} \lambda(\Omega').
\end{eqnarray}

\end{define}


\begin{assumption}
\label{assump:predicate-set}
The value set $\BB{S}$ is a polyhedron, and for every predicate $p(x)$ we have that $\{x \in \BB{S} \mid p(x)=\True\}$ admits a finite partition into convex polyhedra.
\end{assumption}

The assumption is equivalent to the case of predicates induced
by finite sets of affine support functions.
In other words, the functions $f$ underlying are piecewise affine and
have a finite number of pieces (cells).
{\color{orange}[Cristi: Sadra, do you know a name for such functions?]}

In the following, we will consider convex polyhedra with full, partial, or
no boundaries.

\begin{theorem}
\label{th:language-representation}
For any STL formula $\phi \not\equiv \False$ defined over the compact set $\BB{S}$
of signal values, we have that $\Proj(\phi)$ admits a finite partition of convex polyhedral subsets of $\BB{S} \times \left[ 0, \norm{\phi} \right)$.
\end{theorem}
\begin{proof}
The property follows by structural induction over the structure of STL formulae.
First note that all sets must be bounded since
$\Proj(\phi) \subseteq \BB{S} \times \left[ 0, \norm{\phi} \right)$ for all $\phi \in \Phi^\CA{S}$.
The base cases are $\Proj(\True) = \BB{S} \times [0, h)$ and
$\Proj(p) = \{(x, 0) \mid x \in \BB{S}, p(x)=\True\} \cup \left(\BB{S} \times (0, h)\right)$,
where $h \in \BB{R}_{\geq 0}$ is some time horizon.
The first case it trivially true, while the second follows from the assumption
that $\phi \not\equiv \False$ which implies that there is at least one
signal that satisfies the predicate at time 0, and Assump.~\ref{assump:predicate-set}.
Note that the intersection, union, and complementation with respect to $\BB{S} \times [0, h)$, $h\geq 0$, of convex polyhedra produce finite collections of convex polyhedra.
Thus, the induction step holds trivially for conjunction, and complementation.
The case for the until operator follows from the observation that $\Proj(\phi_1 \Until_{[a, b)} \phi_2)$ is the union of $\Proj(\phi_1) \times [0, b)$ and $\Proj(\phi_2) \times [a, b)$ projected onto $\BB{S} \times [0, h)$.
Both sets are polyhedra in $dim(\BB{S})+2$, and the projection operator preserve their polyhedral structure.
This concludes the proof since all induction cases hold.
\end{proof}

\begin{corollary}
If all predicates are rectangular, then the partition of $\Proj(\phi)$
is composed of hyper-boxes.
\end{corollary}
}

\subsection{Comparison}

While the \hausdorff distance and the \symdiff difference are both metrics, they have quite different behaviors. Here, we elaborate on these differences and show an illustrative example. 

Informally, the \hausdorff distance has a stronger \emph{spatial} notion, and it is closely connected to STL score, as stated in Corollary \ref{corollary_signed} and Corollary \ref{corollary_neighborhood}. The \hausdorff distance, captures the worst-case spatial difference between formulae. On the other hand, the \symdiff is a more \emph{temporal} notion, as the areas also capture the length of temporal operators. It is possible that two STL formulae have a large \hausdorff distance, but a small \symdiff distance, and vice versa.
In applications, the choice is dependent on the user. The most useful may be a convex combination - which is a metric by itself - with a user-given convex coefficient. 
  
\begin{example}
\label{example_table}
Consider the six STL formulae in \eqref{eq_formulas}. We compute the \hausdorff and the \symdiff distances between all pairs of formulae using the methods proposed in Section~\ref{sec:methods}. The directed \hausdorff distances are also reported. 

1) Directed \hausdorff distance: The results are shown in Table~\ref{table_PH}, where the value in the $i$'th row and $j$'th column is $\vec{d}_{\hausdorff}(\phi_i,\phi_j)$. Each maximizer (the \hausdorff distance) is bolded. 

We have also included the truth constant in the distance table. It is observed that $\vec{d}_{\hausdorff}(\phi_i,\True)=0, \forall i \in \{1,\cdots,6\}$, which implies the fact that the language of each formula is contained within the language of $\True$, which is the set of all signals. The opposite direction, $\vec{d}_{\hausdorff}(\True,\phi)={d}_{\hausdorff}(\phi,\True)$, is, informally, the quantification of how \emph{restrictive} $\phi$ is.

Note that $d_{PH}(\phi_1,\phi_2)=0.04$. It is observed that most values are either $0.6=\max(1-0.4,0.2-0)$ or $0.56=\max(1-0.44,0.2-0)$, which correspond to the extreme signal that one language  contains but the other does not, or it is $0$, indicating that one language is a subset of another. For instance, $\phi_3$ is a ``weak'' specification in the sense that its language is \emph{broad} - any signal with some value in $\theta_1$ at some time satisfies it - so the directed distances from other formulae to $\phi_3$ are zero. Another notable example is the relation between $\phi_1$ and $\phi_4$. The directed PH distance is zero in both directions - the two formulae are equivalent. This is due to the fact that any signal that satisfies $\phi_4$, already satisfies $\phi_1$. The other direction also trivially holds. Note that some pairs, like $\phi_2$ and $\phi_3$, have non-zero \hausdorff distances in both directions. 

2) \symdiff distance: The results are shown in Table~\ref{table_PHSD} along with the \hausdorff distances for comparison.
Here, it can be seen that in most cases, the \symdiff distance is either a lot larger or a lot smaller than the \hausdorff distance.
This is largely due to the fact that this metric is based on area which is particularly highlighted when comparing any of the formulae to $\True$.
Since each formula's satisfaction space is very small in comparison to the entire bounded signal space, each of these values is quite large.
In contrast, the \symdiff distance between $\phi_1$ and $\phi_5$ is fairly small since the satisfaction regions for each of these formulae cover a similar area.
The \symdiff distance between $\phi_2$ and $\phi_3$ is on the larger side as their areas of satisfaction are quite different; however, these areas are still much closer to each other than they are to the entire bounded satisfaction area represented by $\True$.
Similar to the \hausdorff distance, the \symdiff distance between $\phi_1$ and $\phi_4$ is zero as they have completely overlapping areas of satisfaction.

The results in Table~\ref{table_PHSD} illustrate that there are different situations when the \hausdorff distance might be favored over the \symdiff distance and vice versa.
In cases where one cares about the area covered by the satisfaction region of a formula, the \symdiff distance should be used.
For instance, the \symdiff could be used to find a formula close to one that requires a signal to be held at a particular value for a long time interval.
However, if one only cares about how close the signal bounds of the formulae are to each other, the \hausdorff distance should be used.


\begin{table}
\centering
\vspace{2mm}
\caption{Example \ref{example_table}: Directed PH Distances}
\begin{tabular}{ | c | c | c | c | c | c | c | c | }
\hline
$\vec{d}_{PH}$ & $\True$ & $\phi_1$ & $\phi_2$ & $\phi_3$ & $\phi_4$ & $\phi_5$ & $\phi_6$ \\ \hline
$\True$ & \cellcolor{black!40}$0$ & $\bf 0.6$ & $\bf 0.56$ & $\bf 0.6$ & $\bf 0.6$ & $\bf 0.6$ & $\bf 0.6$ \\ \hline
$\phi_1$ & $0$ & \cellcolor{black!40}$0$  & $0$ & $0$ & $\bf 0$ & $0$ & $0$ \\ \hline
$\phi_2$ & $0$ & $\bf 0.04$ & \cellcolor{black!40}$0$ & ${0.04}$ & $\bf 0.04$ & $0.04$ & $0.04$  \\ \hline
$\phi_3$ & $0$ & $\bf 0.6$ & $\bf 0.56$ &  \cellcolor{black!40}$0$ & $\bf 0.6$ & $\bf 0.6$ & $\bf 0.6$  \\ \hline
$\phi_4$ & $0$ & $\bf 0$ & $0$ & $0$ & \cellcolor{black!40}$0$ & $0$ & $0$ \\ \hline
$\phi_5$ & $0$ & $\bf 0.6$ & $\bf 0.56$ & $0$ & $\bf 0.6$ & \cellcolor{black!40}$0$ & $0.04$ \\ \hline
$\phi_6$ & $0$ & $\bf 0.6$ & $\bf 0.56$ & $0$ & $\bf 0.6$ & $\bf 0.6$ & \cellcolor{black!40}$0$ \\ \hline
\end{tabular}
\label{table_PH}
\end{table}

\begin{table}
\centering
\caption{Example \ref{example_table}: \hausdorff and \symdiff Distances}
\begin{tabular}{ | c | c | c | c | c | c | c | c | }
\hline
\backslashbox{$\!\!\!d_{PH}\!\!\!$}{$\!\!\!d_{SD}\!\!\!$} & $\True$ & $\phi_1$ & $\phi_2$ & $\phi_3$ & $\phi_4$ & $\phi_5$ & $\phi_6$ \\ \hline
$\True$ & \cellcolor{black!40}$0$ & $0.8$ & $0.76$ & $0.99$ & $0.8$ & $\!0.804\!$ & $0.84$ \\ \hline
$\phi_1$ & $0.6$ & \cellcolor{black!40}$0$  & $0.04$ & $0.19$ & $0$ & $\!0.036\!$ & $0.03$ \\ \hline
$\phi_2$ & $0.56$ & $0.04$ & \cellcolor{black!40}$0$ & $0.23$ & $0.04$ & $\!0.044\!$ & $0.07$  \\ \hline
$\phi_3$ & $0.6$ & $0.56$ & $0.56$ &  \cellcolor{black!40}$0$ & $0.19$ & $\!0.186\!$ & $0.16$  \\ \hline
$\phi_4$ & $0.6$ & $0$ & $0.04$ & $0.6$ & \cellcolor{black!40}$0$ & $\!0.036\!$ & $0.03$ \\ \hline
$\phi_5$ & $0.6$ & $0.56$ & $0.56$ & $0.6$ & $0.6$ & \cellcolor{black!40}$0$ & $\!0.066\!$ \\ \hline
$\phi_6$ & $0.6$ & $0.56$ & $0.56$ & $0.6$ & $0.6$ & $0.6$ & \cellcolor{black!40}$0$ \\ \hline
\end{tabular}
\label{table_PHSD}
\end{table}

\end{example}

\section{Computation}
\label{sec:methods}

This section presents algorithms for computing the \hausdorff and the \symdiff distances between STL specifications.

\subsection{Pompeiu-Hausdorff Distance} 
In this section, we propose an optimization-based method to compute the \hausdorff distance between two STL formulae. 
\begin{define}
Given an STL formula $\varphi$ that contains no negation, we define $\varphi^{\epsilon+}$ with the same logical structure as $\varphi$ with predicates replaced as follows:
\begin{itemize}
\item $f(x) \ge \mu$ replaced with $f(x) \ge \mu - \epsilon$;
\item $f(x) \le \mu$ replaced with $f(x) \le \mu + \epsilon$.
\end{itemize} 
\end{define}
Intuitively, $\varphi^{\epsilon+}$ is a relaxed version of $\varphi$. It is easy to verify from \eqref{eq:quantitative-semantics} that if $\rho(s,\varphi^{\epsilon+},0)=\rho(s,\varphi,0)+\epsilon, \forall s \in \CA{S}_T$. 
\begin{lemma}
The following relation holds:
\begin{equation}
\label{eq_PH_min}
\vec{d}_{PH}=\min \{\epsilon \ge 0 \mid \CA{L}(\varphi_1) \subseteq \CA{L}(\varphi_2^{\epsilon+}) \}
\end{equation}
\end{lemma}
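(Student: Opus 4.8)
The plan is to reduce the claim to a single set identity and then invoke the Minkowski-ball characterization \eqref{PH_ball} of the directed distance. Concretely, I will show that
$\CA{L}(\varphi_2^{\epsilon+}) = \CA{L}(\varphi_2) + \epsilon \CA{B}^{\CA{S}_T}$
for every $\epsilon \ge 0$. Granting this identity, the feasible set $\{\epsilon \ge 0 \mid \CA{L}(\varphi_1) \subseteq \CA{L}(\varphi_2^{\epsilon+})\}$ appearing in \eqref{eq_PH_min} coincides exactly with the feasible set $\{\epsilon \ge 0 \mid \CA{L}(\varphi_1) \subseteq \CA{L}(\varphi_2) + \epsilon \CA{B}^{\CA{S}_T}\}$ from \eqref{PH_ball}, so the two minima agree and the lemma follows immediately. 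Note that forming $\varphi_2^{\epsilon+}$ presupposes $\varphi_2$ is negation-free, which is exactly the regime in which the shift property is stated.

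To establish the identity I would begin from the robustness shift $\rho(s,\varphi_2^{\epsilon+},0) = \rho(s,\varphi_2,0) + \epsilon$ recorded just after the definition of $\varphi^{\epsilon+}$. Combined with the language definition \eqref{eq:def-language}, this rewrites the relaxed language as a single sublevel condition on the original score,
$\CA{L}(\varphi_2^{\epsilon+}) = \{ s \in \CA{S}_T \mid \rho(s,\varphi_2,0) \ge -\epsilon \}$.
The remaining work is purely to recognize this set as the $\epsilon$-dilation of $\CA{L}(\varphi_2)$, and here I would use the signed-distance reading of the score: for $s \notin \CA{L}(\varphi_2)$ the proposition identifying $\rho$ with a signed distance, together with $\vec{d}_{PH}(s,\varphi_2) = \min_{s' \in \CA{L}(\varphi_2)} d(s,s')$, gives $\rho(s,\varphi_2,0) = -\min_{s' \in \CA{L}(\varphi_2)} d(s,s')$. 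Hence $\rho(s,\varphi_2,0) \ge -\epsilon$ is equivalent to $\min_{s' \in \CA{L}(\varphi_2)} d(s,s') \le \epsilon$, and unwinding $d(s,s') = \sup_t \|s_t - s'_t\|_\infty$ this says precisely that $s - s' \in \epsilon \CA{B}^{\CA{S}_T}$ for some $s' \in \CA{L}(\varphi_2)$, i.e. $s \in \CA{L}(\varphi_2) + \epsilon \CA{B}^{\CA{S}_T}$. The case $s \in \CA{L}(\varphi_2)$ is handled directly: both sides contain such $s$, since $\rho(s,\varphi_2,0) \ge 0 \ge -\epsilon$ and $0 \in \epsilon \CA{B}^{\CA{S}_T}$. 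Compactness of $\CA{L}(\varphi_2)$ — guaranteed by compactness of $\BB{S}$ and the non-strict predicates — ensures the infimum defining the distance is attained, so the Minkowski sum is closed and the equivalence is exact rather than only up to closure.

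The only delicate point, and the one I would treat with care, is the boundary behaviour at $\rho = 0$, where the paper adopts the convention that $\rho = 0$ counts as satisfaction. I would make sure the signed-distance proposition is applied strictly in the regime where it is valid, namely $s \notin \CA{L}(\varphi_2)$ (equivalently $\vec{d}_{PH}(s,\varphi_2) > 0$, hence $\rho < 0$), and fold every signal on the boundary of $\CA{L}(\varphi_2)$ into the $s \in \CA{L}(\varphi_2)$ case. This keeps the $\ge$/$\le$ inequalities aligned with the non-strict Minkowski dilation and avoids any mismatch from the sacrificed law of contradiction. With the identity $\CA{L}(\varphi_2^{\epsilon+}) = \CA{L}(\varphi_2) + \epsilon \CA{B}^{\CA{S}_T}$ in place, substitution into \eqref{PH_ball} yields \eqref{eq_PH_min} with no further computation.
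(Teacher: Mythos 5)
Your proposal is correct and follows essentially the same route as the paper, which proves the lemma as a one-line sketch by asserting the identity $\CA{L}(\varphi_2^{\epsilon+})=\CA{L}(\varphi_2)+\epsilon \CA{B}^{\CA{S}_T}$ and invoking \eqref{PH_ball}. You supply the justification of that identity (via the robustness shift, the signed-distance proposition, and the compactness/boundary bookkeeping) that the paper omits, which strengthens rather than changes the argument.
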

\begin{proof}
(sketch)
The result is a direct consequence of \eqref{PH_ball} as we have $\CA{L}(\varphi_2^{\epsilon+})=\CA{L}(\varphi_2)+\epsilon \CA{B}^{\CA{S}_T}$.  
\end{proof}
The following statement provides the main result, and the base for the computational method of this section. 

\begin{theorem}
\label{thm_PH}
Given $\varphi_1,\varphi_2 \in \Phi^{\CA{S}_T}$, $\CA{L}(\varphi_1) \neq \emptyset$, define $\epsilon^*$ as the following optimum:
\begin{equation}
\label{eq_one_way_MILP}
\begin{array}{lll}
\epsilon^* = & \max & \epsilon, \\
& \text{subject to} & s \models \varphi_1, s \not\models \varphi_2^{\epsilon+}, \\ 
&& \epsilon \ge 0, s \in \mathcal{S}_T.
\end{array}
\end{equation}
Then the following holds:
\begin{equation}
\vec{d}_{PH}(\varphi_1,\varphi_2)= \left\{ 
\begin{array}{ll}
\epsilon^* & \eqref{eq_one_way_MILP} \text{ is feasible}, \\
0 & \text{otherwise}.
\end{array}
\right.
\end{equation}
\end{theorem}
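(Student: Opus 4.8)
The plan is to build directly on the preceding Lemma, which already expresses the directed distance as $\vec{d}_{PH}(\varphi_1,\varphi_2)=\min\{\epsilon \ge 0 \mid \CA{L}(\varphi_1) \subseteq \CA{L}(\varphi_2^{\epsilon+})\}$ in \eqref{eq_PH_min}, and to show that \eqref{eq_one_way_MILP} locates the threshold $\epsilon$ at which this inclusion first ceases to fail. The first step is to record a monotonicity fact: increasing $\epsilon$ only weakens each predicate of $\varphi_2^{\epsilon+}$, so $\CA{L}(\varphi_2^{\epsilon_1+}) \subseteq \CA{L}(\varphi_2^{\epsilon_2+})$ whenever $\epsilon_1 \le \epsilon_2$. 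Hence the property ``$\CA{L}(\varphi_1) \subseteq \CA{L}(\varphi_2^{\epsilon+})$'' is monotone in $\epsilon$: the set of $\epsilon \ge 0$ on which it holds is the upward-closed interval $[\vec{d}_{PH},\infty)$, and the set on which it fails is $[0,\vec{d}_{PH})$.

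Next I would translate the constraints of \eqref{eq_one_way_MILP} into this language. By the definition of $\CA{L}(\cdot)$, the existence of a signal $s \in \CA{S}_T$ with $s \models \varphi_1$ and $s \not\models \varphi_2^{\epsilon+}$ is, for a fixed $\epsilon$, exactly the assertion $\CA{L}(\varphi_1) \not\subseteq \CA{L}(\varphi_2^{\epsilon+})$. Therefore the set of $\epsilon$ admitting a feasible $s$ is precisely the failure interval $[0,\vec{d}_{PH})$ from the first step. This settles the dichotomy in the theorem at once: the program is infeasible iff $[0,\vec{d}_{PH})=\emptyset$, i.e. iff $\vec{d}_{PH}=0$ (equivalently $\CA{L}(\varphi_1)\subseteq \CA{L}(\varphi_2)$), giving the second branch; and it is feasible exactly when $\vec{d}_{PH}>0$, in which case $\sup[0,\vec{d}_{PH})=\vec{d}_{PH}$.

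It then remains to argue that this supremum is attained, so that the ``$\max$'' is legitimate and equals $\vec{d}_{PH}$. Using $\rho(s,\varphi_2^{\epsilon+},0)=\rho(s,\varphi_2,0)+\epsilon$, the joint feasible region is $\{(s,\epsilon) \mid s \in \CA{L}(\varphi_1),\ 0 \le \epsilon < -\rho(s,\varphi_2,0)\}$, so maximizing $\epsilon$ reduces to $\sup_{s \in \CA{L}(\varphi_1)} \max\{0,-\rho(s,\varphi_2,0)\}$, which is the signal-to-formula reformulation of \eqref{eq_directed} and equals $\vec{d}_{PH}(\varphi_1,\varphi_2)$ via the signed-distance Proposition. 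Since $s \mapsto \inf_{s' \in \CA{L}(\varphi_2)} d(s,s')$ is Lipschitz and $\CA{L}(\varphi_1)$ is compact (from compactness of $\BB{S}$ and the non-strict predicates), the supremum is achieved at some $s^\ast$, whence $\epsilon^\ast=-\rho(s^\ast,\varphi_2,0)=\vec{d}_{PH}$. The one delicate point---and the main obstacle---is the strict inequality implicit in $s \not\models \varphi_2^{\epsilon+}$: at the maximizer $s^\ast$ it reads $\epsilon<\vec{d}_{PH}$, so the value is only approached unless the boundary is admitted. I would dispatch this exactly as the paper treats the $\rho=0$ ambiguity, counting $\rho=0$ as satisfaction so that the closure point is feasible and the optimum is attained; this is also the convention under which the MILP encoding of these constraints is exact.
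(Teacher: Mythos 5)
Your proof is correct and follows essentially the same route as the paper: it rests on the preceding Lemma, recasts the directed distance as the supremum of those $\epsilon$ for which the inclusion $\CA{L}(\varphi_1)\subseteq\CA{L}(\varphi_2^{\epsilon+})$ fails, identifies that failure set with the feasible set of \eqref{eq_one_way_MILP}, and closes the strict constraint so that the supremum becomes an attained maximum. You are in fact more explicit than the paper on the two points it glosses over (monotonicity of $\epsilon\mapsto\CA{L}(\varphi_2^{\epsilon+})$ and attainment via compactness); just note that the boundary point is admitted because the constraint $s\not\models\varphi_2^{\epsilon+}$ is encoded with the non-strict inequality $\rho(s,\varphi_2^{\epsilon+},0)\le 0$ (equivalently, $\rho=0$ counts as satisfaction of $\neg\varphi_2^{\epsilon+}$), rather than as satisfaction of $\varphi_2^{\epsilon+}$ itself, which would exclude it.
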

\begin{proof}
First, consider the case that \eqref{eq_one_way_MILP} is infeasible or its value is $0$. Then, it means that the constraints $s \in \CA{L}(\varphi_1), s \not \in \CA{L}(\varphi_2^{\epsilon+})$ are infeasible for all $\epsilon > 0$, which implies that for any $s \in \CA{L}(\varphi_1)$, we have $\rho(s, \CA{L}(\varphi_2),0) \ge 0$. Thus, $\CA{L}(\varphi_1) \subset \CA{L}(\varphi_2)$ and consequently $\vec{d}_{PH}(\varphi_1,\varphi_2)=0$.

Now consider the case \eqref{eq_one_way_MILP} is feasible and $\epsilon^*>0$.  Then, $s \not \models \varphi_2^{\epsilon^*+}$ is an active constraint, which implies $\rho(s,\varphi_2^{\epsilon^*+},0)=0$. Note that $s$ is also optimized in \eqref{eq_one_way_MILP}. Thus $\rho(s,\varphi_2,0)<0$, or $s \not \in \CA{L}(\varphi_2)$. 
We can rewrite \eqref{eq_PH_min} as:
\begin{equation}
\label{eq_PH_max}
\vec{d}_{PH}=\sup \{\epsilon \ge 0 \mid \CA{L}(\varphi_1) \not \subseteq \CA{L}(\varphi_2^{\epsilon+}) \}.
\end{equation}
Note that we have used $\sup$ instead of $\max$ as $\CA{L}(\varphi_1) \not \subseteq \CA{L}(\varphi_2^{\epsilon+})$ is a strict relation. Also note that such the supremum exists as i) the condition is satisfied for $\epsilon=0$ and ii) the language sets are bounded. We show that \eqref{eq_one_way_MILP} captures \eqref{eq_PH_max}. If $\CA{L}(\varphi_1) \not \subseteq \CA{L}(\varphi_2^{\epsilon+})$, then it means that $\exists s \models \varphi_1$ but $\rho(s,\varphi_2^{\epsilon^+},0)<0$. This is what is captured by the constant in $\eqref{eq_one_way_MILP}$, with the difference that $\rho(s,\varphi_2^{\epsilon^+},0)<0$ is replaced by a non-strict inequality and $\sup$ is replaced by $\max$.  
\end{proof}
We convert \eqref{eq_one_way_MILP} into a MILP problem. The procedure for converting STL into MILP constraints is straightforward, see, e.g., \cite{raman2014model}. The encoding details are omitted here. By solving two MILPs, we are able to obtain the \hausdorff distance. Two MILPs can be aggregated into a single MILP, but that usually more than doubles the computation time due to larger branch and bound trees. Moreover, it is often useful to have the knowledge of the directed \hausdorff distances. 

Theorem \ref{thm_PH} requires that formulae do not contain negation. Negation elimination is straightforward: first, the formula is brought into its Negation Normal Form (NNF), where all negations appear before the predicates. Next, the predicates are \emph{negated}. For example, we replace $\neg (x \le \mu)$ by $(x \ge \mu)$. We remind the reader that we do not consider strict inequalities, hence $\neg (x \le \mu)$ and $(x \le \mu)$ are both true if $x_0=\mu$. 
Finally, observe that the choice of $T$ does not effect the values of \hausdorff distance, as long as it is larger than the horizons of two formulae that are compared. Given $\phi_1, \phi_2 \in \Phi^{\CA{S}_T}$, the values of $s_t$ for $t>\max\{\norm{\phi_1},\norm{\phi_2}\}$ do not have any associated constraints in \eqref{eq_one_way_MILP}. 
  
\subsection*{Complexity}
The complexity of \eqref{eq_one_way_MILP} is exponential in the number of integers, which grows with the number of predicates and horizons of the formulae. However, since signal values do not have any dynamical constraints, we found solving \eqref{eq_one_way_MILP} to be orders of magnitudes faster than comparable STL control problems, such as those studied in \cite{raman2014model}. All the values obtained in Table \ref{table_PH}  were evaluated almost instantaneously using Gurobi MILP solver on a personal computer. 

\begin{figure}[tb]
\centering
\subfloat[$\phi_1$]{\includegraphics[width=0.30\linewidth]{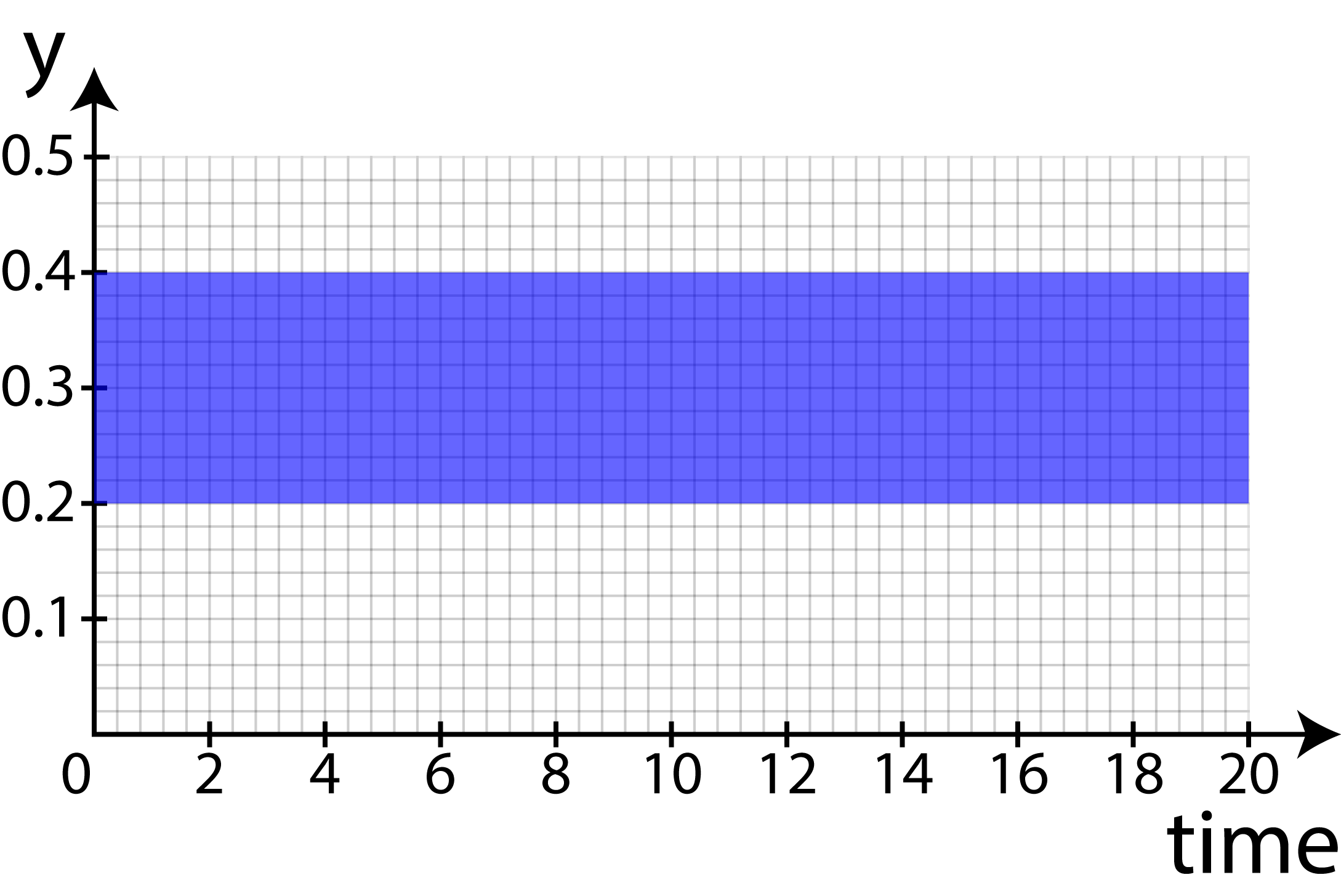}
\label{fig:phi1}}
\subfloat[$\phi_5$]{\includegraphics[width=0.30\linewidth]{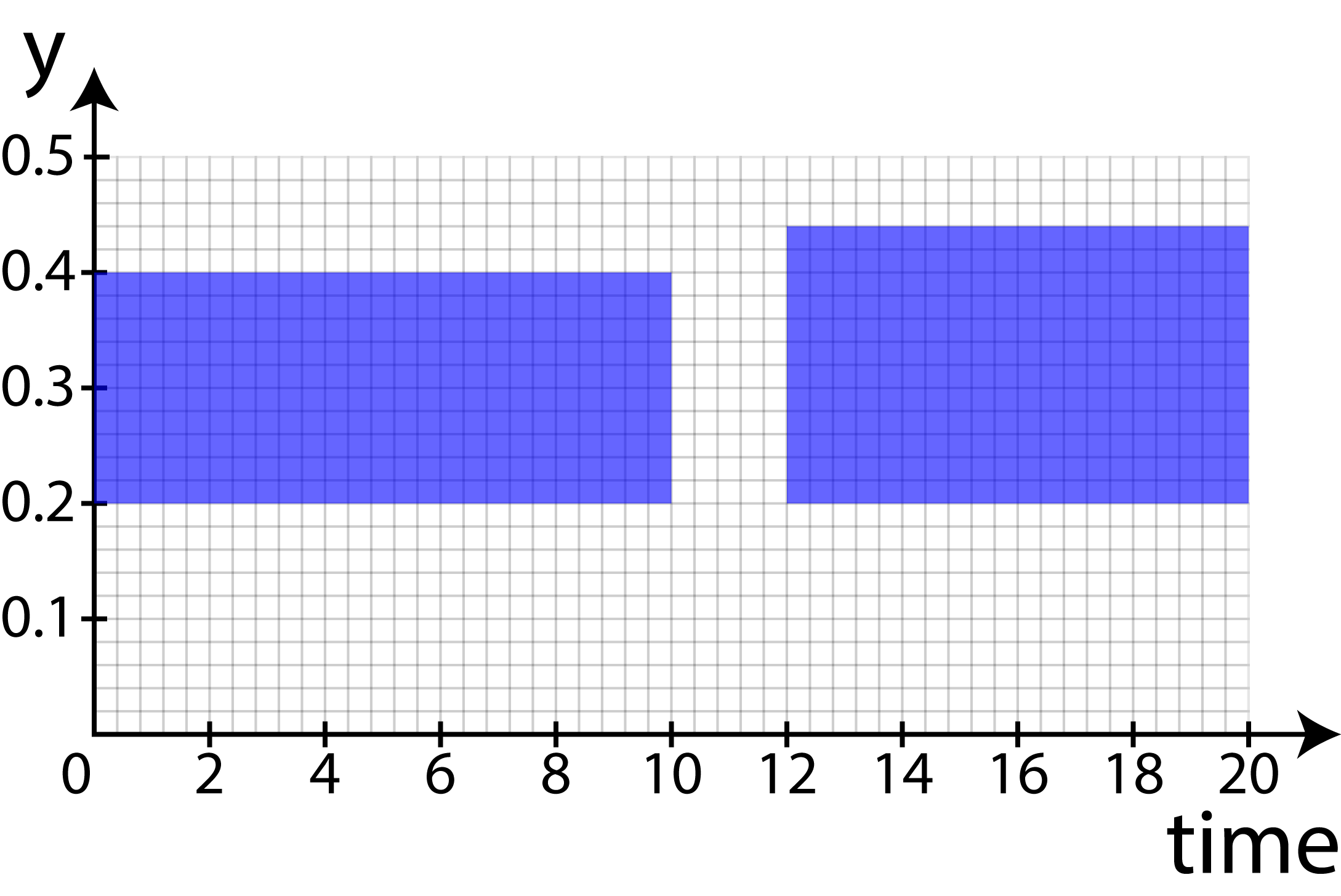}
\label{fig:phi5}}
\subfloat[Overlap $\phi_1$ and $\phi_5$]{\includegraphics[width=0.30\linewidth]{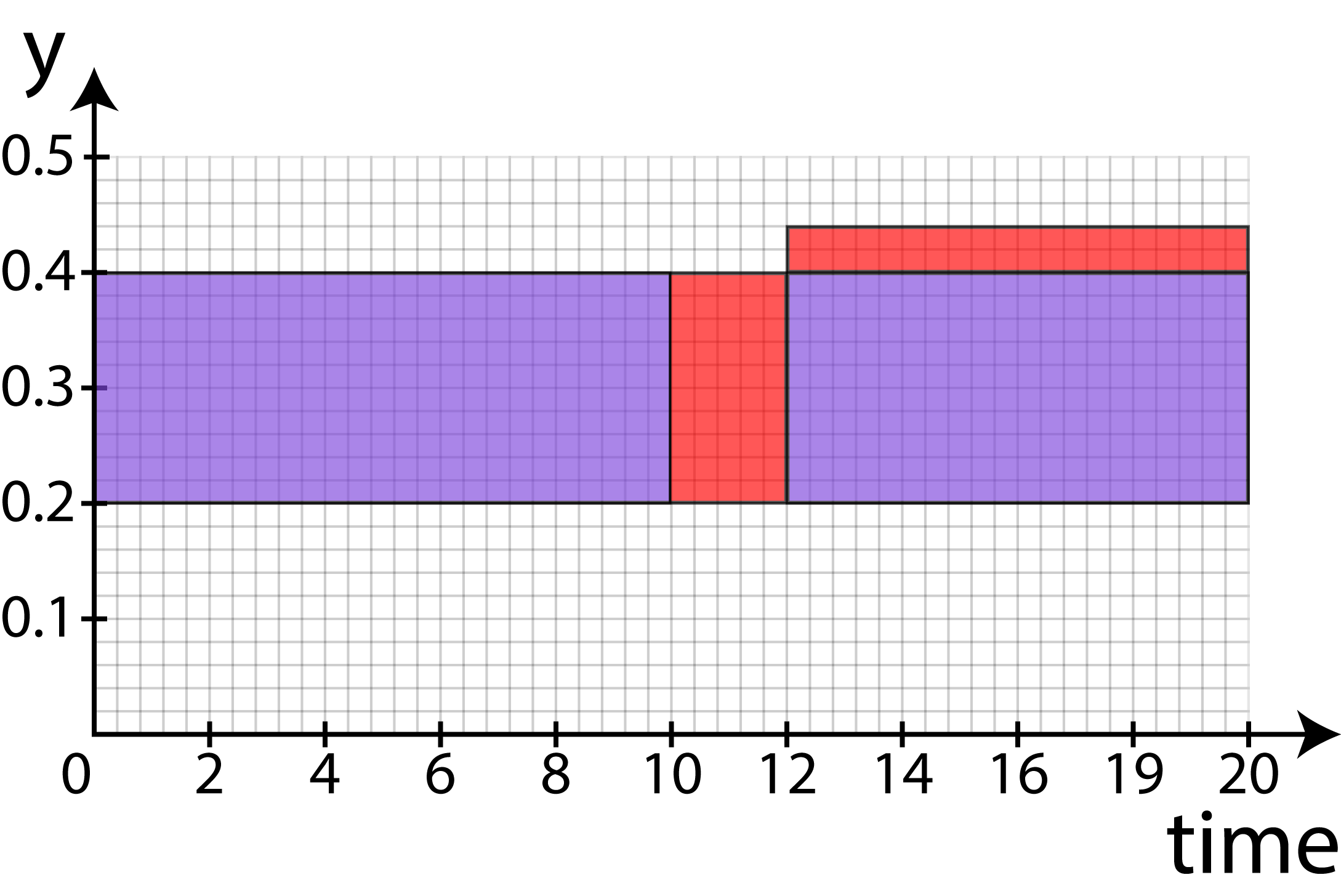}
\label{fig:overlap}}
\caption{\small \protect\subref{fig:phi1} and \protect\subref{fig:phi5} show the area of satisfaction boxes for $\phi_1$ and $\phi_5$ from Example~\ref{example_table}, respectively.
The blue regions represent the boxes that are computed for globally ($\Always$) operators.
In \protect\subref{fig:overlap}, the red regions represent the non-overlapping area and the purple regions represent the overlapping area between $\phi_1$ and $\phi_5$.
The \symdiff distance for this example is the area of the red regions ($(2 \times 0.2) + (8 \times 0.04) = 0.72$) divided by the maximum time horizon which is $\frac{0.72}{20} = 0.036$.}
\label{fig:example1boxes}
\end{figure}

\subsection{Symmetric Difference}

This section presents an algorithm for computing boxes representing the area of satisfaction of a formula as well as a method for determining the \symdiff between two sets of boxes.
Each set of boxes approximates the projection ($\Proj$) of the formula and represents the valid value-space that a time-varying signal can take such that traces that are contained entirely within the boxes satisfy the formula.

Computing the set of boxes representing the area of satisfaction is a recursive process that takes as input an STL formula, $\phi$, a set of max values, $X_{max}$, for each signal, $x \in X$ (used to normalize the signal values to a unit space), and a discretization threshold, $\delta$.
This algorithm, $AoS$, is presented in Algorithm~\ref{alg:boxes}
Here, $box(t_1,t_2,x_1,x_2, i) = \Proj(\Always_{[t_1, t_2]} x_1 \leq x^i \leq x_2)
\subseteq \BB{S}\times T\BB{U}$
creates a new box with minimum and maximum times $t_1$ and $t_2$, minimum and maximum values $x_1$ and $x_2$, and spatial dimension $i \in \{1, \ldots, n\}$, respectively;
$overlap$ determines the time window of the overlap between two boxes;
$combine$ 
takes two boxes and produces a set of boxes representing the intersection of the overlapping time window region;
$b.lt$, $b.ut$, $b.lv \in \BB{S}$, and $b.uv \in \BB{S}$ return the lower time window, upper time window, lower variable, and upper variable values for box $b$, respectively;
$*$ in the definition of a box denotes that it may restrict multiple spatial dimensions;
and the $\parallel$ operator is used to create a ``choice'' set representing that either of the two sets separated by it can be selected as the set of boxes representing the area of satisfaction.

\SetAlFnt{\footnotesize}
\begin{algorithm}[!htb]
\DontPrintSemicolon
\SetKwInOut{Input}{input}\SetKwInOut{Output}{output}
\Input{STL formula $\phi$, max value set $X_{max}$, discretization threshold $\delta$.}
\Output{Set of boxes $\mathcal{B}$ for each signal in $\phi$.}
\BlankLine
\lIf{$\phi := x^i \leq \pi$}{
  \Return{$box(0,0,0,\pi/x_{max}, i)$}
}
\lElseIf{$\phi := x^i > \pi$}{
  \Return{$box(0,0,\pi/x_{max},1, i)$}
}
\uElseIf{$\phi := \phi_1 \land \phi_2$}{ 
  Create a new set $\mathcal{B}$\;
  \For{\textbf{\textup{each}} $b_1 \in AoS(\phi_1)$ \textup{and} \textbf{\textup{each}} $b_2 \in AoS(\phi_2)$}{
    \lIf{$overlap(b_1,b_2)$}{
      Add $combine(b_1,b_2)$ to $\mathcal{B}$
    }
    \lElse{
      Add $b_1$ to $\mathcal{B}$ and $b_2$ to $\mathcal{B}$
    }
  }
  \Return{$\mathcal{B}$}\;
}
\lElseIf{$\phi := \phi_1 \lor \phi_2$}{
  \Return{$AoS(\phi_1) \parallel AoS(\phi_2)$}
}
\uElseIf{$\phi := \Always_{[t_1,t_2]}(\phi_1)$}{
  Create a new set $\mathcal{B}$\;
  \For{\textbf{\textup{each}} $b \in AoS(\phi_1)$}{
    $b' = box(b.lt+t_1, b.ut+t_2, b.lv, b.uv, *)$\;
    Add $b'$ to $\mathcal{B}$\;
  }
  \Return{$\mathcal{B}$}\;
}
\ElseIf{$\phi := \Event_{[t_1,t_2]}(\phi_1)$}{
  \Return{$AoS\left(\displaystyle\bigvee_{i=1}^{(t_2 - t_1)/\delta}\Always_{[t_1 + \delta (i-1),t_1 + \delta i]}(\phi_1)\right)$}\;
}
\caption{Convert to Area of Satisfaction Boxes ($AoS$)\label{boxes}}
\label{alg:boxes}
\end{algorithm}


To address the problem of projection of formulae containing disjunction (the converse to  Theorem~\ref{th:proj-soundness}), $AoS$ utilizes the $\parallel$ operator.
If this algorithm instead generated boxes representing the projection of all formulae, it would be possible for the satisfaction space represented by the boxes to capture signals that the original formula does not allow.
The application in Section~\ref{sec:design-quality} highlights this problem and presents a way of dealing with it for that particular example.

For operators such as globally ($\Always$), $AoS$ is exact and produces boxes bound by the time bounds of the operator that represent the projection of the primitive.
However, operators such as eventually ($\Event$) do not immediately lend themselves to conversion into a set of boxes.
In order to deal with this operator, we approximate it by converting it into a disjunction of globally predicates.
Each globally predicate is generated using a small threshold value ($\delta$) for its time window width.
The new formula requires that the expression be true in at least one of the smaller time windows essentially introducing a mandatory $\delta$ ``hold'' time for eventually operators.
The tunability of $\delta$ allows for a user to give up some accuracy for gains in performance of box computation and ultimately \mudist comparison.
Examples of computing the area of satisfaction boxes for some of the formulae in Example~\ref{example_table} are shown in Figure~\ref{fig:example1boxes}.

\journal{
In the limit as $\delta$ approaches zero, each of the globally predicates will enforce that the expression is satisfied in a single time instance.
A proof of this equality in semantics is as follows:

\begin{proof}
{\footnotesize\begin{align*}
& s \models (\Event_{(t_1,t_2)} \phi) &\Leftrightarrow \ & \lim_{\delta \to 0} \big(s \models (\Always_{(t_1,t_1 + \delta)} \phi) \: \lor\\
&&& s \models (\Always_{(t_1 + \delta,t_1 + 2\delta)} \phi) \\
&&&\lor ... \lor\\
&&& s \models (\Always_{(t_2 - 2\delta,t_2 - \delta)} \phi)\\
&&& \lor s \models (\Always_{(t_2 - \delta,t_2)} \phi)\big)\\
& s \models (\True \Until_{(t_1,t_2)} \phi) &\Leftrightarrow \ & \lim_{\delta \to 0}  \big(\forall t_u \in (t_1,t_1 + \delta) \ s[t_u] \models \phi \: \lor\\
&&&\forall t_u \in (t_1+ \delta,t_1 + 2\delta) \ s[t_u] \models \phi \\
&&&\lor ... \lor\\
&&&\forall t_u \in (t_2 - 2\delta,t_2 - \delta) \ s[t_u] \models \phi \\
&&&\lor \forall t_u \in (t_2 - \delta,t_2) \ s[t_u] \models \phi\big)\\
& \exists t_u \in (t_1,t_2) \text{ s.t. } \big(s[t_u] \models \phi\big) \!\!\!\!\!\! &\Leftrightarrow \ & \exists t_u \in (t_1,t_2) \text{ s.t. } \big(s[t_u] \models \phi\big)
\end{align*}}
\end{proof}
}


The \symdiff between two sets of boxes is computed by calculating the area of the sum of the non-intersected area for each box set.
This value is normalized by the maximum time horizon, $\maxhorz$, and results in the \symdiff computation:
\begin{equation*}
\label{eq:area_formula}
d_{\symdiff}(\CA{B}_{\phi_1}, \CA{B}_{\phi_2}) = \frac{1}{\maxhorz+1}\left|\left(\bigcup_{b_1\in\CA{B}_{\phi_1}} b_1 \right) \triangle \left(\bigcup_{b_2 \in\CA{B}_{\phi_2}} b_2 \right)\right|
\end{equation*}
Figure~\ref{fig:overlap} visually illustrates how the \symdiff between $\phi_1$ and $\phi_5$ is computed.

Note that the \symdiff distance is scaled by $\frac{T+1}{T'+T+1}$ if the maximum horizon
is increased by $T'$. This again shows the temporal nature of the \symdiff as opposed to
the \hausdorff distance which does not change.

\subsection*{Complexity}
The complexity of Algorithm~\ref{alg:boxes} depends on the complexity of the $\parallel$ operation which may be exponential depending on how it is implemented.
Otherwise, the algorithm is polynomial due to the box combination operations carried out whenever a conjunction predicate is encountered.
In practice, computing the \symdiff distance using this method for formulae with a few dozen
predicates typically takes only a few seconds.





\section{Quantification of design quality}
\label{sec:design-quality}

In our first application, we show an example of how the proposed metrics can be used in behavioral synthesis.
Behavioral synthesis is an important process in design automation where the description of a desired behavior is interpreted and a system is created that implements the desired behavior.
Our goal is to check if the characterized implementations satisfy the specifications of a system.
Implementations include simulations and execution traces of a system. 
These implementations are characterized into formal specifications using TLI.
We show that the proposed metrics can be used in the synthesis step to choose a design from the solution space that can best implement the desired specification.
The specific example we have chosen to highlight this application is the synthesis of genetic circuits in synthetic biology. 

\subsection*{Synthetic Genetic Circuit Synthesis}

\journal{
Biological cells can produce a range of behaviors in response to the complex biochemical factors that influence the cell. 
One of the fundamental goals in synthetic biology is to program living cells with synthetic decision-making genetic circuits to produce a desired behavior in these biochemical environments~\cite{vaidyanathan2015framework}.
}
In this example, we have a set of desired behaviors (each formally represented by STL) which describe the various behaviors expected of a genetic circuit. 
This set of behaviors is referred to as a performance specification: $\set^\phi$. $\set^\phi$ consists of 2 \stl\ formulae: $\phi_{low}$ and $\phi_{high}$ which describe the desired amount of output produced by the genetic circuit over time:
\begin{align*}
& \phi_{low} &=& \:\:\Always_{[0,300]}(x<40) \land \Always_{[0,300]}(x>0) \\
& \phi_{high} &=& \:\:\Always_{[0,125]}(x<200) \land \Always_{[125,300]}(x<320)\:\:\land \\
&&& \:\:\Always_{[0,150]}(x>0) \land \Always_{[150,200]}(x>100)\:\:\land \\
&&& \:\:\Always_{[200,300]}(x>150) 
\end{align*}
In this case, the output of the circuit corresponds to the expression of a fluorescent protein. 
$\phi_{low}$ specifies that the output must consistently be below 40 units from time 0 to 300, and 
$\phi_{high}$ specifies that that output must gradually increase over time and must end up between 150 and 320 units between time 200 and 300.
Our solution space consists of two genetic circuits. The first circuit has a constitutive promoter as shown in Figure~\ref{fig:constitutiveSBOL}. 
Constitutive expression removes flexibility for consistency allowing constant protein production independent of the state or inputs of the system, which is highlighted in Figure~\ref{fig:constitutiveExperiment}. 
The second circuit has an inducible promoter: a sugar detecting transcription factor AraC*, which will turn on the protein production if and only if it is in the presence of a specific input molecule (arabinose) as shown in Figure~\ref{fig:inductionSBOL}. Figure~\ref{fig:inductionExperiment} shows the output of the circuit for various concentrations of arabinose. Both of these synthetic genetic circuits were built in \textit{Escherichia coli}. The traces were obtained from biological experiments by measuring fluorescence. 

\begin{figure}[tb]
\centering
\subfloat[Constitutive Expression]{\includegraphics[width=0.4\linewidth]{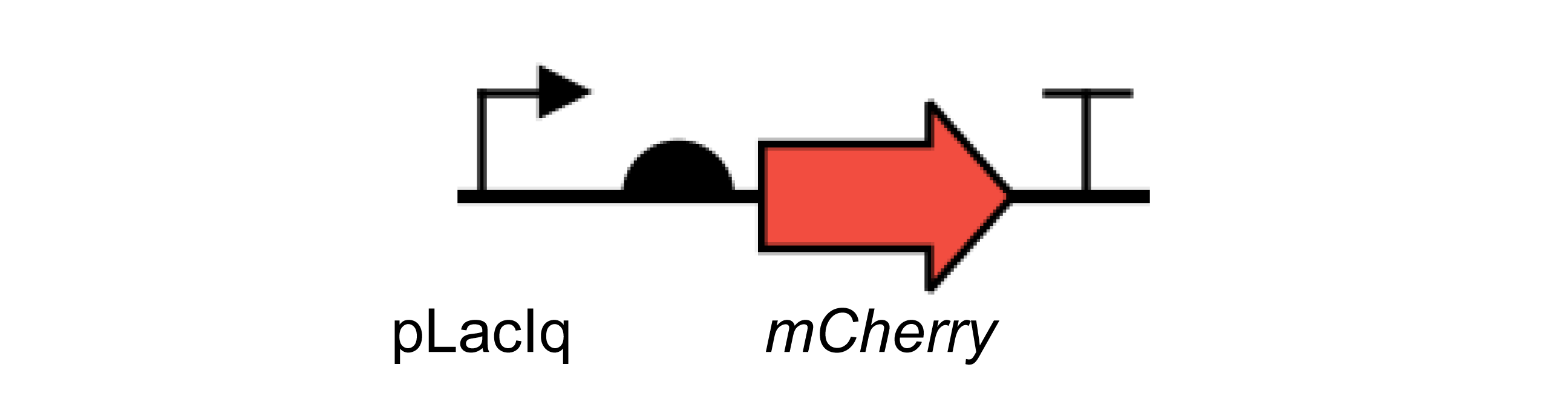}
\label{fig:constitutiveSBOL}}
\subfloat[Induction Circuit]{\includegraphics[width=0.4\linewidth]{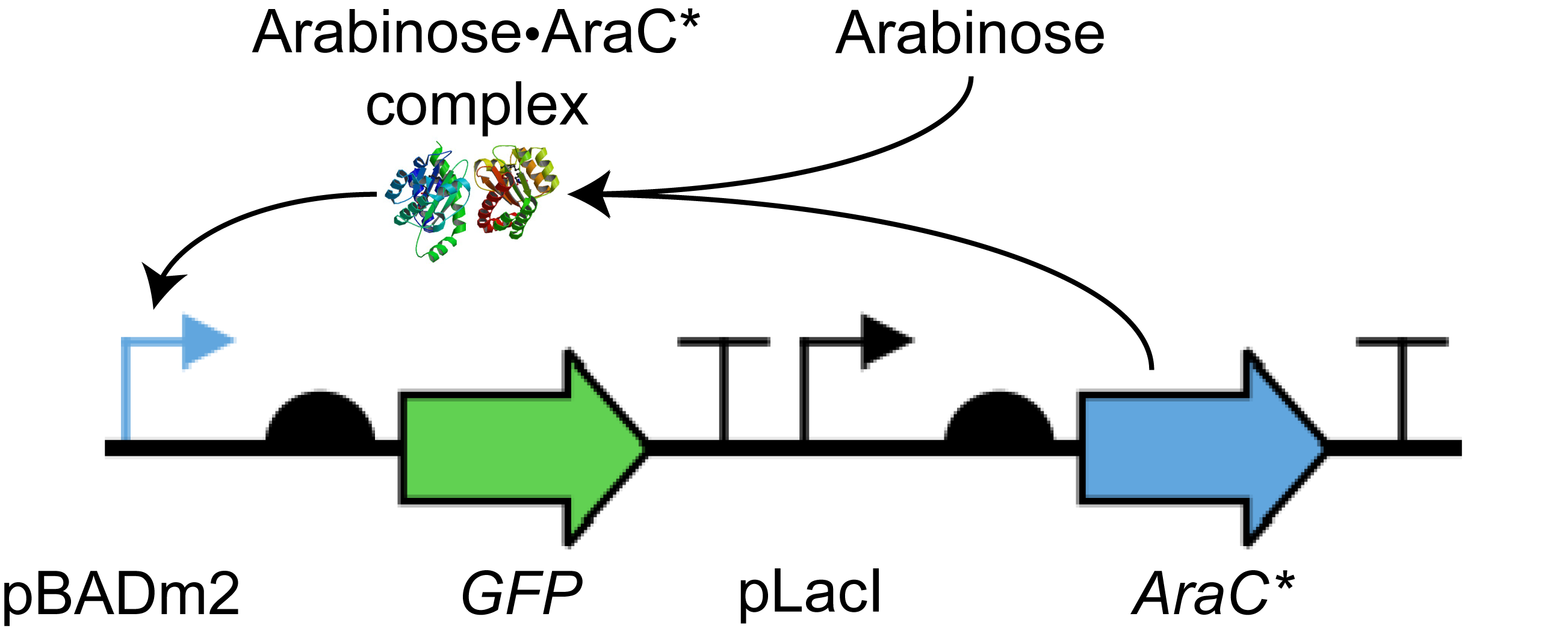}
\label{fig:inductionSBOL}}\\
\subfloat[Output of Constitutive Expression]{\includegraphics[width=0.4\linewidth]{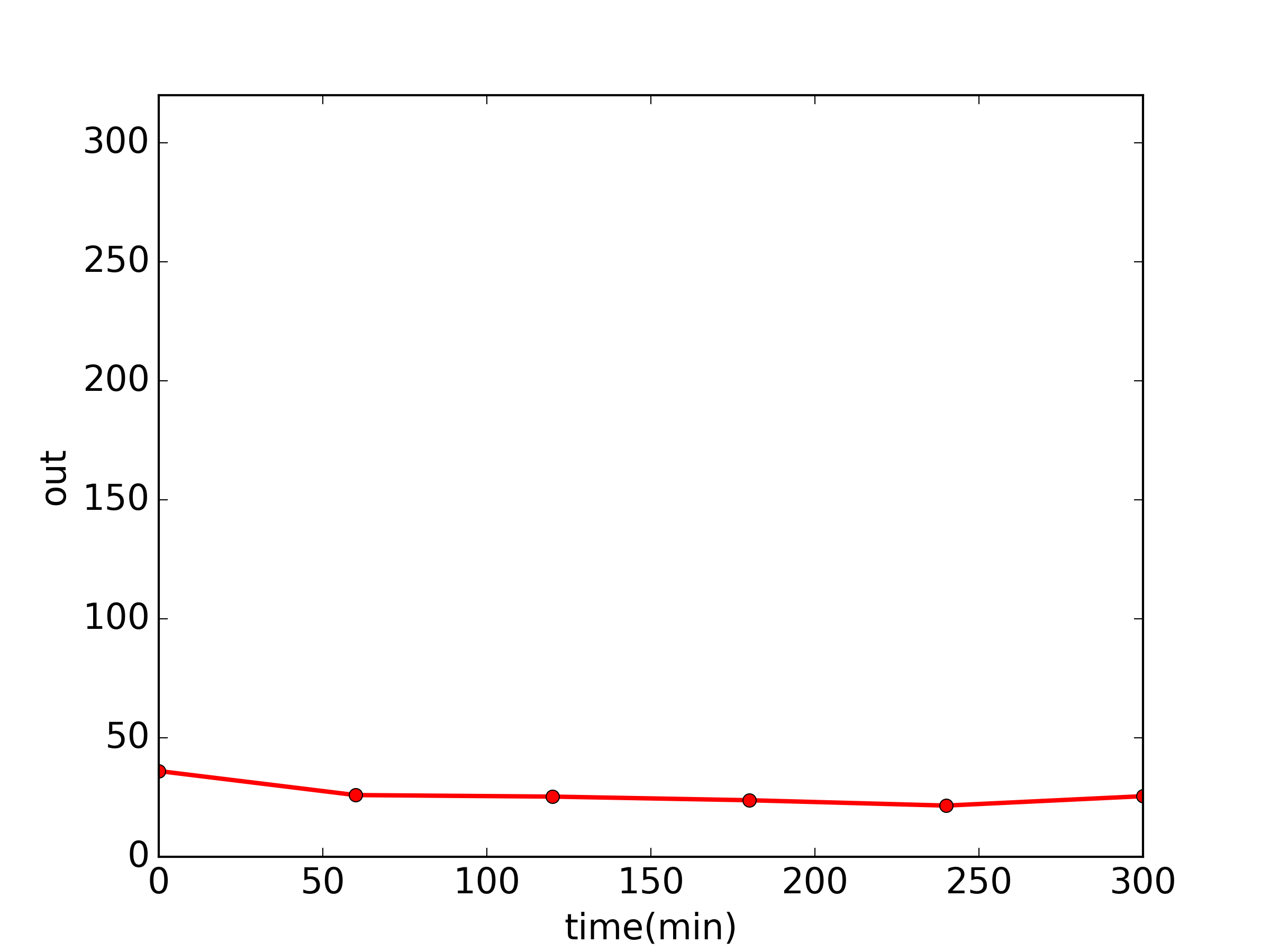}
\label{fig:constitutiveExperiment}}
\subfloat[Output of Induction Circuit]{\includegraphics[width=0.4\linewidth]{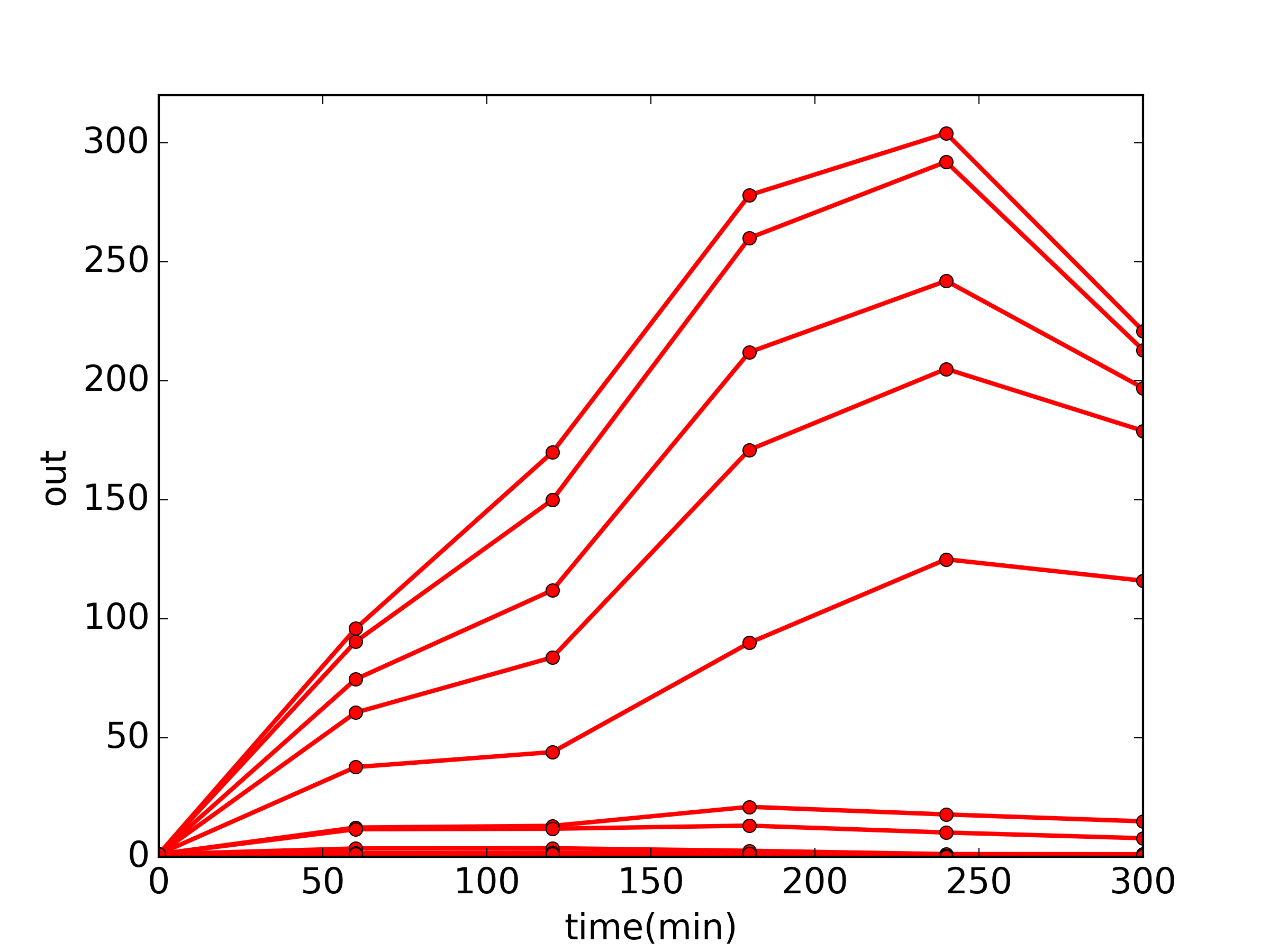}
\label{fig:inductionExperiment}}
\caption{\small \protect\subref{fig:constitutiveSBOL} and \protect\subref{fig:inductionSBOL} show SBOL Visual representations of the genetic circuits with a constitutive promoter and an inducible promoter, respectively. Biological traces in \protect\subref{fig:constitutiveExperiment} and \protect\subref{fig:inductionExperiment} were obtained by evaluating geometric mean fluorescence at regular intervals by flow cytometry.}
\label{fig:sbolCircuits}
\end{figure}


Our goal is to choose the circuit that can ``satisfy as many behaviors as possible'' in $\set^\phi$. 
It is important to note here that it is difficult to express the term ``satisfy as many behaviors as possible'' using the syntax and semantics of \stl. 
For instance, expressing the desired specification as a disjunction of all the formulae in $\set^\phi$ would imply that satisfying any one specification is sufficient for the genetic circuit to satisfy the performance specification.
Similarly, expressing the desired specification as a conjunction of all the formulae in $\set^\phi$ would imply that at any point in time, the output of a genetic circuit must have multiple distinct values, which is physically impossible. 

This conundrum is highlighted in the current example.
The output of constitutive expression satisfies $\phi_{low}$ but cannot satisfy $\phi_{high}$.
The induction circuit produces traces that can satisfy both $\phi_{low}$ and $\phi_{high}$.
However, traditional model checking techniques may not help a designer choose the desired circuit.
Using statistical model checking, for example, the circuit with constitutive expression yields a satisfaction likelihood of $1.0$ and the induction circuit yields a satisfaction likelihood of $0.83$ when checked against $\phi_{low} \lor \phi_{high}$.
With these results, one might think that the circuit with constitutive expression best satisfies the performance specification.
\journal {
whereas the output of the induction circuit yields a satisfaction likelihood of $0.5$ when checked against $\phi_{low}$ and $0.33$ when checked against $\phi_{high}$. It is also interesting to note that the satisfaction likelihood of the induction circuit when checked against $\phi_{low}\orltl\phi_{low}$ yields a value of $0.83$.
}

To address the issue of satisfying as many behaviors as possible, we treat the performance specification's region of satisfaction as the union of the regions of satisfaction of all the formulae in $\set^\phi$ as shown in Figure~\ref{fig:union}.
We compute this region by taking the union of the generated boxes for each formula that are computed using Algorithm~\ref{alg:boxes}.
The union of the box sets of all \stl formulae in $\set^\phi$ is represented as $\CA{B}_{\phi_{low}} \cup \CA{B}_{\phi_{high}} = \CA{B}_{\set^\phi}$ and is shown in Figure~\ref{fig:boxes}. 

\begin{figure}[tb]
\centering
\journal{
\subfloat[$\Proj(\phi_{low})$]{\includegraphics[width=0.40\linewidth]{philow.png}
\label{fig:philow}}
\subfloat[$\Proj(\phi_{high})$]{\includegraphics[width=0.4\linewidth]{phihigh.png}
\label{fig:phihigh}}\\
}
\subfloat[$\Proj(\phi_{low} \lor \phi_{high})$]{\includegraphics[width=0.4\linewidth]{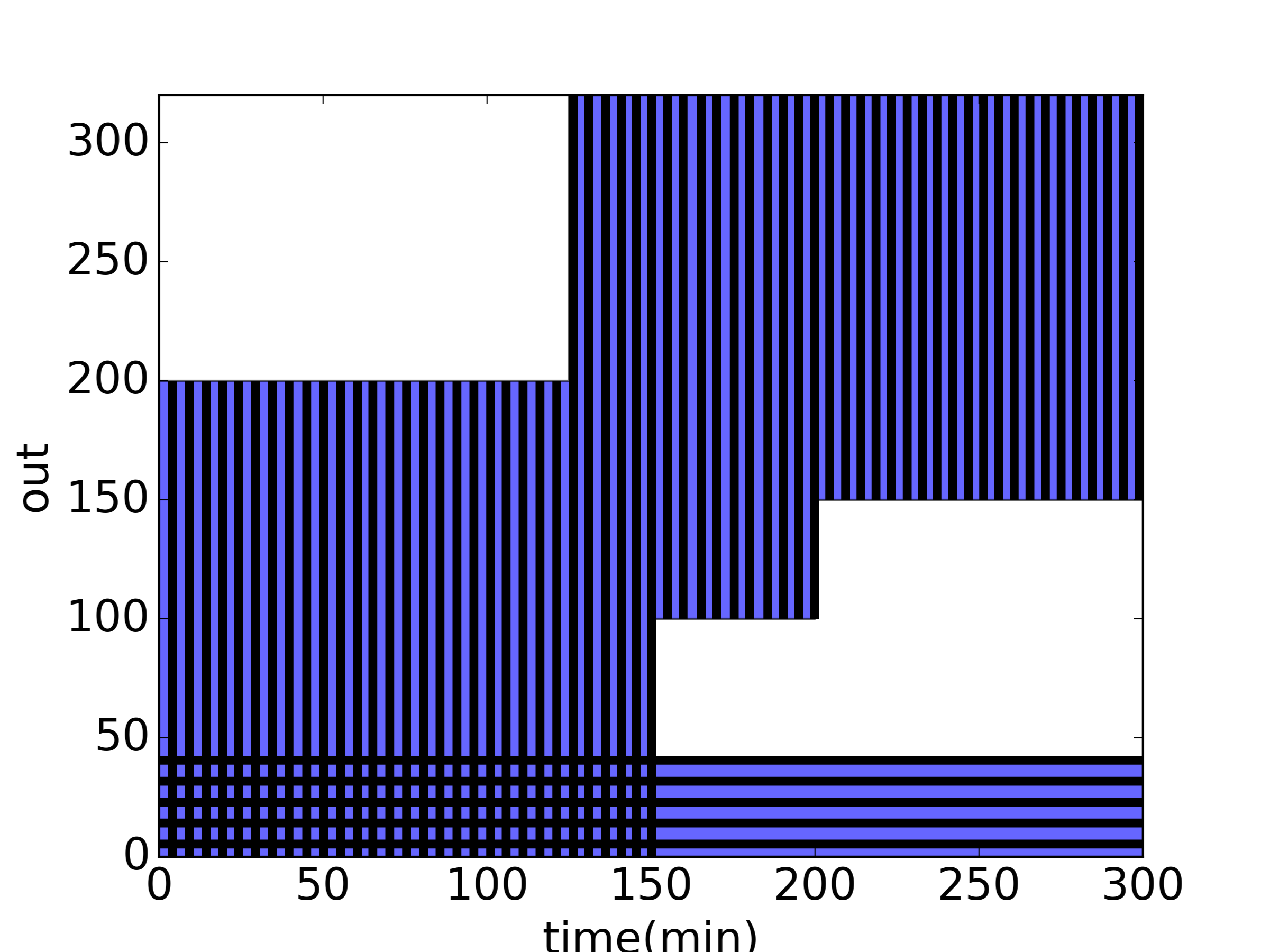}
\label{fig:union}}
\subfloat[$\CA{B}_{\phi_{low}} \cup \CA{B}_{\phi_{high}} = \CA{B}_{\set^\phi}$]{\includegraphics[width=0.4\linewidth]{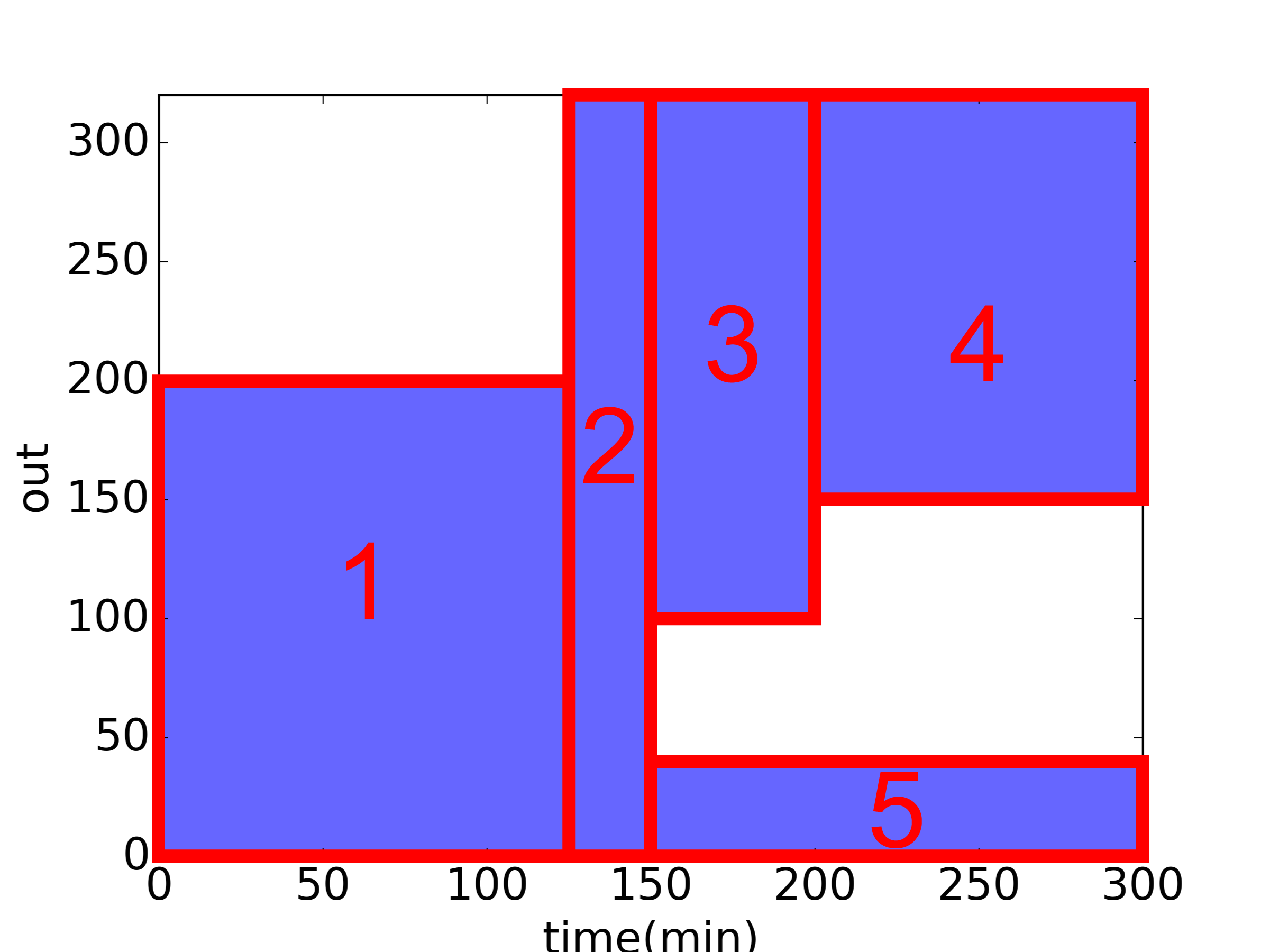}
\label{fig:boxes}}
\caption{\small \journal{\protect\subref{fig:philow} and \protect\subref{fig:phihigh} show the area of satisfaction for $\phi_{low}$ and $\phi_{high}$, respectively. The blue regions represent regions of satisfaction of the $\Always$ operator.} \protect\subref{fig:union} shows the union of the areas of satisfaction for $\phi_{low}$ and $\phi_{high}$. The vertical stripe and horizontal stripe areas represent $\Proj(\phi_{low})$ and $\Proj(\phi_{high})$, respectively. \protect\subref{fig:boxes} shows the boxes created for  $\CA{B}_{\phi_{low}} \cup \CA{B}_{\phi_{high}}$.}
\label{fig:twoClauseUnionBox}
\end{figure}

\journal {
Next, we use temporal logic inference using Grid TLI~\cite{vaidyanathan2017grid} to obtain an STL formula for the traces produced by each circuit. $\phi_{con}$ is the STL generated for the constitutive circuit while $\phi_{ind}$ is the STL generated for the induction circuit (illustrated in Figure~\ref{fig:gridtlirep}). 

\begin{figure}[!htb]
\centering
\subfloat[$\phi_{con}$]{\includegraphics[width=0.4\linewidth]{constitutiveTLI.png}
\label{fig:constitutiveTLI}}
\subfloat[$\phi_{ind}$]{\includegraphics[width=0.4\linewidth]{inductionTLI.png}
\label{fig:inductionTLI}}
\caption{\protect\subref{fig:constitutiveTLI} and \protect\subref{fig:inductionTLI} show the area of satisfaction of $\phi_{con}$ and $\phi_{ind}$, respectively. The area shaded in blue represents the $\Always$ operator.}
\label{fig:gridtlirep}
\end{figure}
}

Using Grid TLI~\cite{vaidyanathan2017grid}, we produce STL formulae, $\phi_{con}$ and $\phi_{ind}$, for each circuit using the traces shown in Figures~\ref{fig:constitutiveExperiment} and \ref{fig:inductionExperiment}, respectively.
We then use the \symdiff metric and get the following values: $d_{\symdiff}(\CA{B}_{\set^\phi},\CA{B}_{\phi_{con}})$ = $0.636$ and $d_{\symdiff}(\CA{B}_{\set^\phi},\CA{B}_{\phi_{ind}})$ = $0.304$. Using the \hausdorff metric, we get: $d_{\hausdorff}(\CA{B}_{\set^\phi},\CA{B}_{\phi_{con}})$ = $0.067$ and $d_{\hausdorff}(\CA{B}_{\set^\phi},\CA{B}_{\phi_{ind}})$ = $0$.
These results imply that the behavior of the induction circuit is closer to the desired specification than the circuit with constitutive expression, and thus, it should be selected as the desired circuit.




\section{Loss functions for TLI}
\label{sec:loss-tli}

Loss functions play a fundamental role in statistical inference and learning theory.
In this framework, we usually have a pair $(\CA{X}, \CA{Y})$ of real vector spaces corresponding
to the state and observation spaces, respectively.
Three ingredients are used in the formalization:
(a) a model of the states $p_X(x)$ -- the prior distribution,
(b) a model of the observations given the state $p_{Y\mid X}(y\mid x)$,
and (c) a real-valued loss function $L: \CA{X} \times \CA{X} \to \BB{R}$.
Let $h : \CA{Y} \to \CA{X}$ be a decision rule, which can also be
interpreted as a partition of
the state space $\CA{X}$ based on observations, and $\CA{H}$ be the set of all decision rules
or the hypothesis space.
The frequentist and Bayesian risks are defined based on the loss functions,
and induce optimal decision rules.
This general framework is the basis for the study and design of decision algorithms
in statistical inference and learning. For more details,
see~\cite{vapnik2013nature}.
\journal{
We can define the frequentist risk as
\begin{equation}
R_{freq}(x, h) = \BB{E}[L(x, h(y)) \mid X=x],
\end{equation}
where the expectation is with respect to $p_{X\mid Y}(x\mid y)$,
while the Bayesian risk is
\begin{equation}
R_{Bayes}(p_X, h) = \BB{E}[R_{freq}(x, h)],
\end{equation}
where $x \sim p_X(x)$.
We also have that $R_{freq}(y, h)=R_{Bayes}(\delta_y, h)$,
where $\delta_y$ is the Dirac distribution with all mass at $y$.
Lastly, we can define the optimal frequentist decision rule by the minimax rule
\begin{equation}
h^*_{freq} = \arg \min_{h \in \CA{H}} \max_{x \in X} R_{freq}(x, h),
\end{equation}
while the optimal Bayesian decision rule is 
\begin{equation}
h^*_{Bayes} = \argmin_{h \in \CA{H}} R_{Bayes}(p_X, h)
\end{equation}
for a given prior distribution $p_X$.
Clearly, the optimal decision rule depends on the chosen loss 
function, and this general framework is the basis for studying
and design of decision algorithms in statistical inference and
learning.
}

In the following, we show how we adapt this framework for TLI,
where we use the proposed metrics, \hausdorff and \symdiff,
as loss functions. In this paper, we only focus on the loss
functions, while characterization of optimal decision rules,
their computation, and regularization are left for future work.

For TLI, the state space $\CA{X} = \Phi^{\CA{S}_\maxhorz}$ is the
set of all time-bounded STL formulae, while the observation space
$\CA{Y} = \spow{\CA{S}_\maxhorz}$ is the set of all languages.
The hypothesis space $\CA{H}$ is composed of decision rules that
map languages to STL formulae.
Lastly, the loss functions are defined as
$L_{STL}: \Phi^{\CA{S}_\maxhorz} \times \Phi^{\CA{S}_\maxhorz} \to \BB{R}$
such that $L_{STL}(\phi, h(S))$ represents the dissimilarity
between the ground truth formula $\phi$ and the STL formula
obtained by the decision rule using the signal set $S \subseteq \CA{S}_\maxhorz$.
We propose to use the \hausdorff and the \symdiff metrics as loss
functions $L_{STL}$.

We assess the performance of the two decision rules from TLI: TreeTLI~\cite{bombara_decision_2016} based on decision trees,
and GridTLI~\cite{vaidyanathan2017grid} based on minimum covers of signals in space-time $\BB{S} \times [0, \maxhorz]$, with respect 
to the two metrics as shown in
Figure~\ref{fig:tli-stat-learn-results}.

The ground truth STL formula that was used to generate the signals
in Figure~\ref{fig:tli-stat-learn-data} is
\begin{align*}
\phi_{GT} =& \Always_{[0, 1]}((x_1 \leq 0.1) \andltl (x_2 \leq 0.6) \andltl (x_2 \geq 0.4)) \\
& \andltl \Always_{[7, 10]} ((x_1 \geq 0.7) \andltl ((x_2 \geq 0.8) \orltl (x_2 \leq 0.2)))
\end{align*}
where $\BB{S}=\BB{U}^2$ and $\maxhorz = \norm{\phi_{GT}} = 10$.
Note that TreeTLI requires both positive and negative examples,
while GridTLI only needs positive ones.
For brevity, we omit here the formalization for rules that require
both types of examples.

\begin{figure}[tb]
\centering
\subfloat[Positive and Negative Signals]{\includegraphics[height=0.43\linewidth]{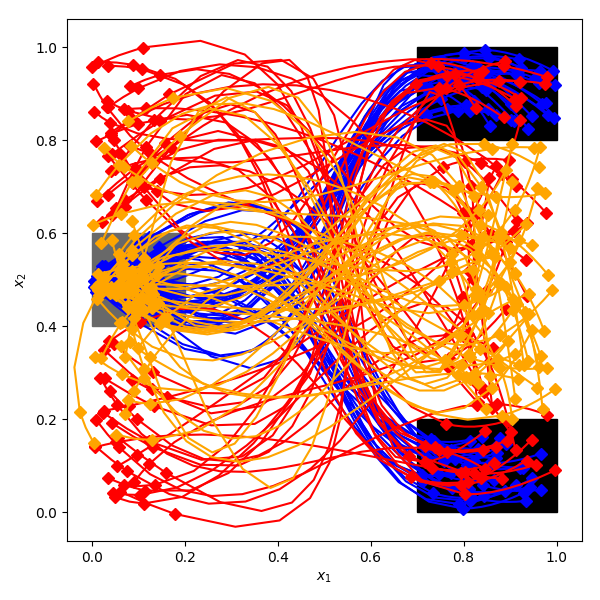}
\label{fig:tli-stat-learn-data}
}
\subfloat[Results]{\includegraphics[height=0.43\linewidth]{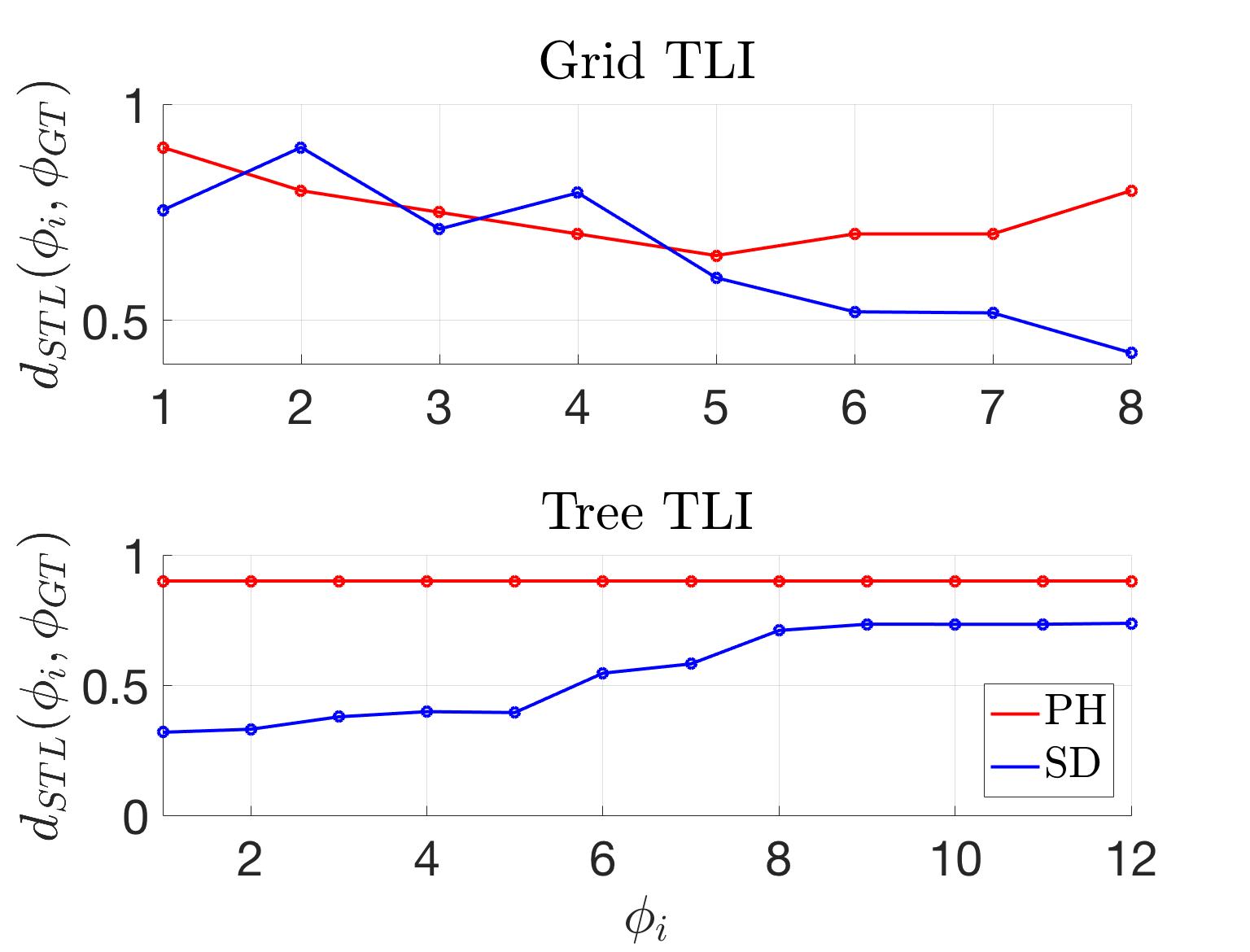}
\label{fig:tli-stat-learn-results}
}
\caption{\small \protect\subref{fig:tli-stat-learn-data} shows the blue positive, and red and orange negative example signals used by the two TLI algorithms.
The positive signals start in the gray region, and end in one of the black regions.
The red negative signals do not start in the gray region,
while the orange ones do not end in the black regions.
\protect\subref{fig:tli-stat-learn-results} shows the \hausdorff and the \symdiff distances
between the ground truth formula $\phi_{GR}$ and the learned formulae using
GridTLI and TreeTLI, respectively.
}
\end{figure}

The results in Figure~\ref{fig:tli-stat-learn-results} show
the distances between the ground truth formula $\phi_{GT}$
and the iterations of TreeTLI (lower plot) as the decision tree
grows~\cite{bombara_decision_2016}.
For GridTLI (upper plot), we varied the discretization thresholds~\cite{vaidyanathan2017grid}
from rougher to finer grids, $\delta_s \in \{0.5, 0.45, \ldots, 0.1\}$ and
$\delta_t \in \{5, 4.5, \ldots, 1\}$ for space and time, respectively.
The upper plot for GridTLI highlights the over-fitting phenomenon in the
\hausdorff metric (red), where reducing the discretization thresholds
helps reducing the error, but further reduction leads to over-fitting.
For the \symdiff (blue), the loss has a decreasing trend which we hypothesize
is due to a better temporal fitting that the \hausdorff distance does not capture.
In the case of TreeTLI (lower plot), the \hausdorff distance (red) is constant.
This masking behavior might be due to the compounding effect of
i) the primitives used do not match the structure of $\phi_{GT}$,
and ii) the incremental and local nature of TreeTLI.
Thus, the first step of the decision tree is heavily penalized by
the \hausdorff metric.
The \symdiff metric (blue), which shows an increasing trend,
is consistent with this conclusion.

Thus, the statistical learning approach to TLI gives insight into
the ability of algorithms to recover temporal logic rules assumed
to underlie data. It also provides a formal framework to study
TLI methods.
A detailed account of problems GridTLI and TreeTLI
are appropriate for based on the insights provided by the proposed
metrics is left for future work.


\section{Discussion and Future Work}
\label{sec:discussion}

We presented two metrics for computing the distance of one STL formula to another.
These methods are very useful in applications where temporal logic specifications are mined from simulation or experimental data, and need to be compared against a desired specification.
Fields such as synthetic biology and robotics, where systems are characterized with performance specifications, can greatly benefit from our methods.
We also showed how these metrics are useful as a first step in evaluating the performance of TLI methods.

An immediate theoretical extension is studying continuous-time signals. By assuming Lipschitz continuity of signals, it is possible to provide bounds between the metrics computed in discrete-time and the ones in continuous-time. A similar idea was used in \cite{fainekos2009robustness} to compute sampled-time STL scores. The second extension is relaxing the assumption on rectangular predicates.  

\bibliographystyle{IEEEtran}
\bibliography{references}

\begin{thebibliography}{10}
\providecommand{\url}[1]{#1}
\csname url@samestyle\endcsname
\providecommand{\newblock}{\relax}
\providecommand{\bibinfo}[2]{#2}
\providecommand{\BIBentrySTDinterwordspacing}{\spaceskip=0pt\relax}
\providecommand{\BIBentryALTinterwordstretchfactor}{4}
\providecommand{\BIBentryALTinterwordspacing}{\spaceskip=\fontdimen2\font plus
\BIBentryALTinterwordstretchfactor\fontdimen3\font minus
  \fontdimen4\font\relax}
\providecommand{\BIBforeignlanguage}[2]{{%
\expandafter\ifx\csname l@#1\endcsname\relax
\typeout{** WARNING: IEEEtran.bst: No hyphenation pattern has been}%
\typeout{** loaded for the language `#1'. Using the pattern for}%
\typeout{** the default language instead.}%
\else
\language=\csname l@#1\endcsname
\fi
#2}}
\providecommand{\BIBdecl}{\relax}
\BIBdecl

\bibitem{baier2008modelchecking}
C.~Baier and J.~Katoen, \emph{Principles of model checking}.\hskip 1em plus
  0.5em minus 0.4em\relax {MIT} Press, 2008.

\bibitem{kress2009temporal}
H.~Kress-Gazit, G.~E. Fainekos, and G.~J. Pappas, ``Temporal-logic-based
  reactive mission and motion planning,'' \emph{IEEE transactions on robotics},
  vol.~25, no.~6, pp. 1370--1381, 2009.

\bibitem{batt2007robustness}
G.~Batt, B.~Yordanov, R.~Weiss, and C.~Belta, ``Robustness analysis and tuning
  of synthetic gene networks,'' \emph{Bioinformatics}, vol.~23, no.~18, pp.
  2415--2422, 2007.

\bibitem{coogan2017formal}
S.~Coogan, M.~Arcak, and C.~Belta, ``Formal methods for control of traffic
  flow: Automated control synthesis from finite-state transition models,''
  \emph{IEEE Control Systems}, vol.~37, no.~2, pp. 109--128, 2017.

\bibitem{Emerson1982}
E.~A. Emerson and E.~M. Clarke, ``Using branching time temporal logic to
  synthesize synchronization skeletons,'' \emph{Science of Computer
  Programming}, vol.~2, no.~3, pp. 241 -- 266, 1982.

\bibitem{pnueli1977temporal}
A.~Pnueli, ``The temporal logic of programs,'' in \emph{Foundations of Computer
  Science, 1977., 18th Annual Symposium on}.\hskip 1em plus 0.5em minus
  0.4em\relax IEEE, 1977, pp. 46--57.

\bibitem{maler2004monitoring}
O.~Maler and D.~Nickovic, ``Monitoring temporal properties of continuous
  signals,'' in \emph{Formal Techniques, Modelling and Analysis of Timed and
  Fault-Tolerant Systems}.\hskip 1em plus 0.5em minus 0.4em\relax Springer,
  2004, pp. 152--166.

\bibitem{fainekos2009robustness}
G.~E. Fainekos and G.~J. Pappas, ``Robustness of temporal logic specifications
  for continuous-time signals,'' \emph{Theoretical Computer Science}, vol. 410,
  no.~42, pp. 4262--4291, 2009.

\bibitem{donze2010robust}
A.~Donz{\'e} and O.~Maler, \emph{Robust satisfaction of temporal logic over
  real-valued signals}.\hskip 1em plus 0.5em minus 0.4em\relax Springer, 2010.

\bibitem{donze2013efficient}
A.~Donz{\'e}, T.~Ferrere, and O.~Maler, ``Efficient robust monitoring for
  {STL},'' in \emph{Computer Aided Verification}.\hskip 1em plus 0.5em minus
  0.4em\relax Springer, 2013, pp. 264--279.

\bibitem{Tumova-ACC13}
J.~Tumova, L.~Reyes-Castro, S.~Karaman, E.~Frazzoli, and D.~Rus,
  ``{Minimum-violating planning with conflicting specifications},'' in
  \emph{American Control Conference (ACC)}, 2013.

\bibitem{Tumova-HSCC13}
J.~Tumova, G.~C. Hall, S.~Karaman, E.~Frazzoli, and D.~Rus, ``Least-violating
  control strategy synthesis with safety rules,'' in \emph{International
  Conference on Hybrid Systems: Computation and Control}, Philadelphia, PA,
  USA, 2013, pp. 1--10.

\bibitem{VaTuKaBeRu-ICRA-2017}
C.-I. Vasile, J.~Tumova, S.~Karaman, C.~Belta, and D.~Rus, ``{Minimum-violation
  scLTL motion planning for mobility-on-demand},'' in \emph{IEEE International
  Conference on Robotics and Automation}, Singapore, Singapore, May 2017, pp.
  1481--1488.

\bibitem{VaAkBe-TCS-2016}
C.~I. Vasile, D.~Aksaray, and C.~Belta, ``{Time Window Temporal Logic},''
  \emph{Theoretical Computer Science}, vol. 691, no. Supplement C, pp. 27--54,
  August 2017.

\bibitem{kim2015minimalrevision}
K.~Kim, G.~Fainekos, and S.~Sankaranarayanan, ``On the minimal revision problem
  of specification automata,'' \emph{The International Journal of Robotics
  Research}, 2015.

\bibitem{ghosh2016diagnosis}
S.~Ghosh, D.~Sadigh, P.~Nuzzo, V.~Raman, A.~Donz{\'e}, A.~L.
  Sangiovanni-Vincentelli, S.~S. Sastry, and S.~A. Seshia, ``Diagnosis and
  repair for synthesis from signal temporal logic specifications,'' in
  \emph{Proc. International Conference on Hybrid Systems: Computation and
  Control}.\hskip 1em plus 0.5em minus 0.4em\relax ACM, 2016, pp. 31--40.

\bibitem{munkres2000topology}
J.~R. Munkres, \emph{Topology}.\hskip 1em plus 0.5em minus 0.4em\relax Prentice
  Hall, 2000.

\bibitem{conway2012course}
J.~B. Conway, \emph{A course in abstract analysis}.\hskip 1em plus 0.5em minus
  0.4em\relax American Mathematical Soc., 2012, vol. 141.

\bibitem{Bartocci2014}
E.~Bartocci, L.~Bortolussi, and G.~Sanguinetti, ``Data-driven statistical
  learning of temporal logic properties,'' in \emph{Formal Modeling and
  Analysis of Timed Systems}, A.~Legay and M.~Bozga, Eds.\hskip 1em plus 0.5em
  minus 0.4em\relax Cham: Springer International Publishing, 2014, pp. 23--37.

\bibitem{Jin2015}
X.~Jin, A.~Donzé, J.~V. Deshmukh, and S.~A. Seshia, ``Mining requirements from
  closed-loop control models,'' \emph{IEEE Transactions on Computer-Aided
  Design of Integrated Circuits and Systems}, vol.~34, no.~11, pp. 1704--1717,
  2015.

\bibitem{bombara_decision_2016}
G.~Bombara, C.-I. Vasile, F.~Penedo, H.~Yasuoka, and C.~Belta, ``A {{Decision
  Tree Approach}} to {{Data Classification Using Signal Temporal Logic}},'' in
  \emph{Proc. {{International Conference}} on {{Hybrid Systems}}:
  {{Computation}} and {{Control}}}.\hskip 1em plus 0.5em minus 0.4em\relax New
  York, NY, USA: {ACM}, 2016, pp. 1--10.

\bibitem{vaidyanathan2017grid}
P.~Vaidyanathan, R.~Ivison, G.~Bombara, N.~A. DeLateur, R.~Weiss, D.~Densmore,
  and C.~Belta, ``Grid-based temporal logic inference,'' in \emph{Annual
  Conference on Decision and Control (CDC)}.\hskip 1em plus 0.5em minus
  0.4em\relax IEEE, 2017, pp. 5354--5359.

\bibitem{Hoxha2018}
B.~Hoxha, A.~Dokhanchi, and G.~Fainekos, ``Mining parametric temporal logic
  properties in model-based design for cyber-physical systems,''
  \emph{International Journal on Software Tools for Technology Transfer},
  vol.~20, no.~1, pp. 79--93, Feb 2018.

\bibitem{Dokhanchi2014}
A.~Dokhanchi, B.~Hoxha, and G.~Fainekos, \emph{5th International Conference on
  Runtime Verification, Toronto, ON, Canada.}\hskip 1em plus 0.5em minus
  0.4em\relax Springer, 2014, ch. {On-Line Monitoring for Temporal Logic
  Robustness}, pp. 231--246.

\bibitem{Kraft2015}
D.~Kraft, ``Computing the {H}ausdorff distance of two sets from their signed
  distance functions,'' \emph{Computational Geometry}, Submitted March 2015.

\bibitem{raman2014model}
V.~Raman, A.~Donz{\'e}, M.~Maasoumy, R.~M. Murray, A.~Sangiovanni-Vincentelli,
  and S.~A. Seshia, ``Model predictive control with signal temporal logic
  specifications,'' in \emph{Decision and Control (CDC), 2014 IEEE 53rd Annual
  Conference on}.\hskip 1em plus 0.5em minus 0.4em\relax IEEE, 2014, pp.
  81--87.

\bibitem{vapnik2013nature}
V.~Vapnik, \emph{The nature of statistical learning theory}.\hskip 1em plus
  0.5em minus 0.4em\relax Springer, 2013.

\end{thebibliography}

\balance

\end{document}